\documentclass[aps,prl,reprint,superscriptaddress,nofootinbib]{revtex4-2}

\usepackage[utf8]{inputenc}
\usepackage[english]{babel}
\usepackage{mathtools}
\usepackage{physics}
\usepackage{float}
\usepackage{placeins}
\usepackage{braket}
\usepackage{bbm}
\usepackage{bm}
\usepackage{amsbsy}
\usepackage{amsthm}
\usepackage{amssymb}
\usepackage{amsfonts}
\usepackage{amsmath}
\usepackage{dsfont} 
\usepackage{graphicx} 
\usepackage{hhline}
\usepackage{epsfig}
\usepackage{epstopdf}
\usepackage{dsfont}
\usepackage{multibib}
\usepackage{xcolor}
\usepackage{color}
\usepackage{verbatim}
\usepackage{nomencl}
\usepackage[colorlinks]{hyperref}
\usepackage{float}
\usepackage[normalem]{ulem}
\usepackage{multirow}
\usepackage{makecell}
\usepackage{soul}

\newtheorem{theorem}{Theorem}

\newtheorem{lemma}{Lemma}

\begin{document}

\title{Kinetically constrained superradiance}
\author{Luis Fernando dos Prazeres}
\email[]{ldospraz@buffalo.edu}
\affiliation{Department of Physics, The State University of New York at Buffalo, Buffalo, New York 14260, USA}
\author{Hossein Hosseinabadi}
 \affiliation{Institut f{\"u}r Physik, Johannes Gutenberg-Universit{\"a}t Mainz, 55099 Mainz, Germany}
\author{Jamir Marino}
\affiliation{Department of Physics, The State University of New York at Buffalo, Buffalo, New York 14260, USA}

\begin{abstract}
We introduce kinetically constrained superradiance, a form of cooperative emission in which interactions imprint configuration-dependent energy shifts on optical transitions, splitting Dicke superradiance into multiple, frequency-resolved collective decay channels. Each channel selectively radiates from distinct many-body spin configurations, generating a hierarchy of dissipative time scales and   sequential relaxation dynamics. Unlike conventional superradiance, where permutation symmetry enforces relaxation to a trivial steady state, configuration-selective emission can trap finite-momentum spin-wave excitations and stabilize long-lived entanglement. Remarkably, these correlations are generated purely by dissipation  in the absence of entangling coherent dynamics. Our results point to modern superradiant experiments as scalable resources for dissipative engineering of correlated quantum states.
\end{abstract}

\maketitle

\textit{\textbf{Introduction}--}
Harnessing dissipation as a resource to prepare quantum states with long-lived coherence and spatially extended   entanglement is a long-sought goal at the interface of ultra-cold atoms, solid state quantum information, and quantum computing~\cite{harrington2022engineered}. Many-body dissipative state engineering aims at designing open-system dynamics whose long-lived transients or steady states encode  quantum correlations generated   by dissipative channels rather than   by many-body interactions, therefore converting the ubiquitous adversary of any quantum information application into an ally~\cite{verstraete2009quantum,diehl2008quantum}. 
Despite intense theoretical activity and a rapidly expanding toolbox of reservoir-engineering strategies~\cite{lin2013dissipative,stannigel2012driven,budich2015dissipative,reiter2016scalable,plankensteiner2015selective,marr2003entangled,pocklington2025accelerating,kastoryano2011dissipative}, scalable experimental implementations remain challenging, with only a few notable exceptions~\cite{mi2024stable,ma2019dissipatively}. In this work, we build a   framework connecting superradiance and many-body dissipative state preparation, with the  goal of leveraging modern experimental advances in the former to develop new strategies for quantum state engineering in open systems.

In Dicke’s original proposal~\cite{Dicke1954,gross1982superradiance} an ensemble of atoms coupled to a common radiation field emits synchronously, producing an emission burst with  intensity  scaling quadratically with the number of atoms, a hallmark of cooperative behavior~\cite{gross1982superradiance,Dicke1954}. Dissipation acts through a single collective channel that preserves permutation symmetry among emitters and drives relaxation into a featureless asymptotic state. In recent years, this phenomenon has been experimentally demonstrated across a broad range of platforms, including atomic ensembles in free space~\cite{DeVoe_SR1996,Guerin_SR2016,ferioli2023non, kim2025super_subradiant},  emitters coupled to optical waveguides~\cite{Sheremet_WaveGuide2023,Goban_WaveGuideSR2015,Solano_WaveGuideSR2016,Tiranov_WaveGuideSR2023,liedl2024observation}, and solid-state systems such as color centers in diamond~\cite{angerer2018superradiant,Pallmann_NVSR2024,kersten2026self}. The established universality \cite{Kumlin2025,Li2025,Malz2022,Masson_Universality2022,Lambert2016,Clopez2023,Lohof2023,Windt2025,RBigorda2022,Kirton2017,Andolina2024,Agarwal_directional2024,Sinha_nonMarkov2020}, together with such trans-disciplinary degree of experimental control now attainable, suggests that superradiance has reached a stage of maturity conducive to  applications.

\begin{figure}[t]
    \centering
    \includegraphics[width=0.99\linewidth]{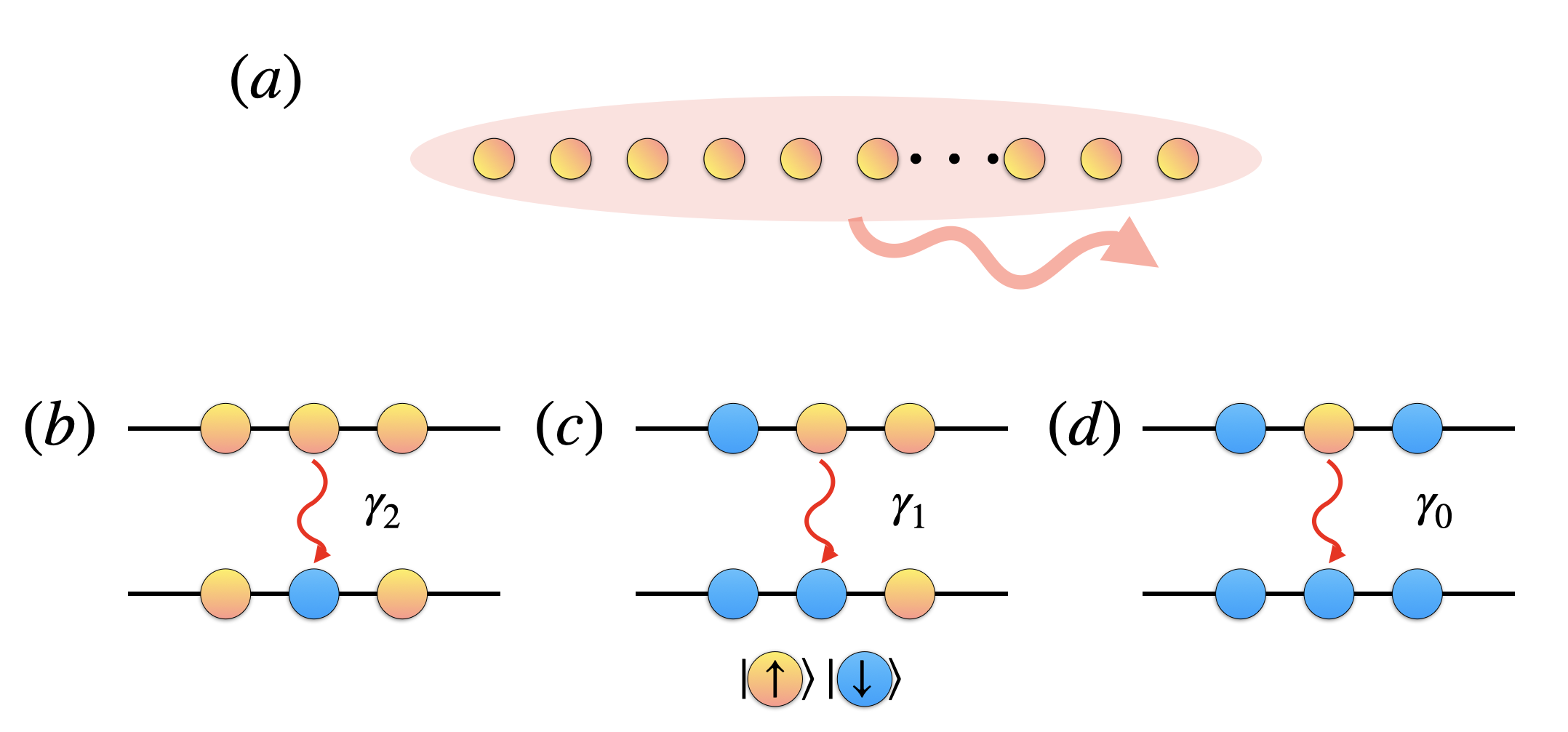}
    \caption{ {Panel $(a)$   depicts the cooperative radiation considered in this work. }Panels $(b)–(d)$ illustrate the three conditional spin-flip processes contributing to such collective emission (with associated decay rates $\gamma_\xi$) selected by the local projectors $P_j^\xi$ for $\xi=2,1,0$; these correspond  to emission events conditioned on the number of excited nearest neighbors, $\xi$.}
    \label{fig:schem}
\end{figure}

Following this route, we consider here a minimal model of an ensemble of two-level atoms coupled by short-range interactions that imprint configuration-dependent energy shifts on optical transitions, such that the frequency of an emitted photon encodes information on the atomic excitations   in the emitter’s surroundings~\cite{Nill22}. As a result, conventional Dicke superradiance breaks down into multiple  emission channels, each of them bright or dark to distinct many-body spin configurations, that can therefore selectively participate to relaxational dynamics. We term this phenomenon  \emph{kinetically constrained} (KC) \emph{superradiance},  and we show that it can drive the system into dissipative steady states that can trap finite-momentum, spin-wave–like collective excitations, and sustain  entanglement over long times, in a   setting where interactions alone could not build   quantum correlations.
 \\

\textit{\textbf{Model}--} {We study  cooperative emission from a one-dimensional array of $N$
two-level atoms with nearest-neighbor interactions, $J$,  and periodic boundary conditions (PBC)}.
The atomic Hamiltonian reads
\begin{equation}
H_A = \Delta \sum_{j=1}^{N} n_j
+ J \sum_{j=1}^N n_j n_{j+1},
\label{eq:ham}
\end{equation}
where $n_j=(1+\sigma_j^z)/2$ is the excitation number operator at site
$j$, $\sigma_j^z|\uparrow(\downarrow)\rangle_j=\pm|\uparrow(\downarrow)\rangle_j$ {and $\Delta, \ J > 0$}.
We  assume that   all atoms couple collectively to the radiation field \cite{Dicke1954} [cf. also Fig \ref{fig:schem} $(a)$]. {A realistic implementation of our model would require to solve a kinetically constrained variant of cavity-QED, as we discuss in the final section of this Letter. This would largely complicate the problem adding further couplings and processes that would obscure the physical picture emerging from   our minimal modeling. This promising direction is left for future work, while here we aim at a proof of principle.}

The presence of atomic interactions   can modify cooperative emission
by making it a state dependent process.
Because interactions shift the energy released upon decay, the
frequency of the emitted photon depends on the number of excited
neighbors of the decaying atom.
In one dimension, an atom can therefore decay through three distinct
processes, emitting photons at frequencies
$\omega_\xi = \Delta + \xi J$, with $\xi = 2,1,0$,
where $\xi$ denotes the number of excited nearest neighbors (cf. panels (b-d) of Fig.~\ref{fig:schem}).
{In contrast to Ref.~\cite{Nill22} where this form of dissipation  is treated as a local process, in our work we consider the case in which that all atoms couple to a common radiation mode, promoting  jump operators to collective form. This apparently straightforward extension is sufficient to enable a form of super-radiance that can support entanglement and trap spin excitations within dark manifolds. A phenomenology distinct from traditional forms of super-radiance~\cite{Li2025,Malz2022,Masson_Universality2022, gross1982superradiance}, and not present in the local dissipative version of this model~\cite{Nill22}.}

{By performing a rotating wave approximation  valid when $J\gg\gamma_\xi$  (see~\cite{SM} for further details),} we arrive at the 
  Lindblad master-equation  \cite{breuer2002,stefanini2025lindblad,Nill22} 
\begin{equation}\label{eq:Lind}
   \dot \rho = -i[H_A, \rho] + \sum_{\xi} \gamma_\xi\left( S_\xi^-\rho S^+_\xi - \frac{1}{2}\left\{S^+_\xi S^-_\xi, \rho \right\}\right),
\end{equation}
with $S^-_\xi = \sum_{j=1}^N P_j^\xi\,\sigma_j^- \ \text{and} \ \xi = 2,1,0$. 
State-dependent   decay processes are implemented by modifying the spontaneous-emission jump operators $\sigma_j^-$ with local projectors $P_j^\xi$, which restrict emission at site $j$ to configurations with exactly $\xi$ excited neighbors. These projectors select the local environment prior to emission, ensuring that the frequency of the emitted photon matches $\omega_\xi$. Specifically, the collective jump operators are constructed from
locally constrained emission processes. The contribution associated with $\xi=2$ selects sites $j$ whose two neighboring sites are excited [Fig.~\ref{fig:schem} $(b)$], implemented through the projector $P^2_j = n_{j-1} n_{j+1}$; the $\xi=1$ channel selects configurations with exactly one excited neighbor [Fig.~\ref{fig:schem}$(c)$], with $P^1_j = n_{j-1}(1-n_{j+1}) + (1-n_{j-1})n_{j+1}$; and the $\xi=0$ channel selects sites surrounded by no excitations [Fig.~\ref{fig:schem} $(d)$], with $P^0_j = (1-n_{j-1})(1-  n_{j+1})$. Each decay channel is associated with a rate
$\gamma_\xi = \Gamma (\Delta + \xi J)^3$ ({see ~\cite{SM} for details.}), where the prefactor $\Gamma$ contains the microscopic details of the light–matter coupling (see~\cite{SM}). 
In the weak interaction regime $J \ll \Delta$, the interaction-induced shifts are small and the decay rates are nearly identical, $\gamma_\xi \simeq \Gamma \Delta^3$.
For stronger interactions $J \gtrsim \Delta$, however, the decay becomes strongly state-dependent, leading to a   hierarchy $\gamma_2 \gg \gamma_1 \gg \gamma_0$, and a corresponding separation of dissipative time scales, whose dynamical consequences are discussed below. {Here   free-space emission has been assumed for simplicity in order to convey the proof-of-principle of the main mechanisms at the core of our work. In a more realistic  cavity-QED implementation (cf. dedicated section) the separation of timescales is achieved following a different route, but the overall physical picture would remain unaltered.}
 \\

\begin{figure}[t]
    \centering
    \includegraphics[width=\linewidth]{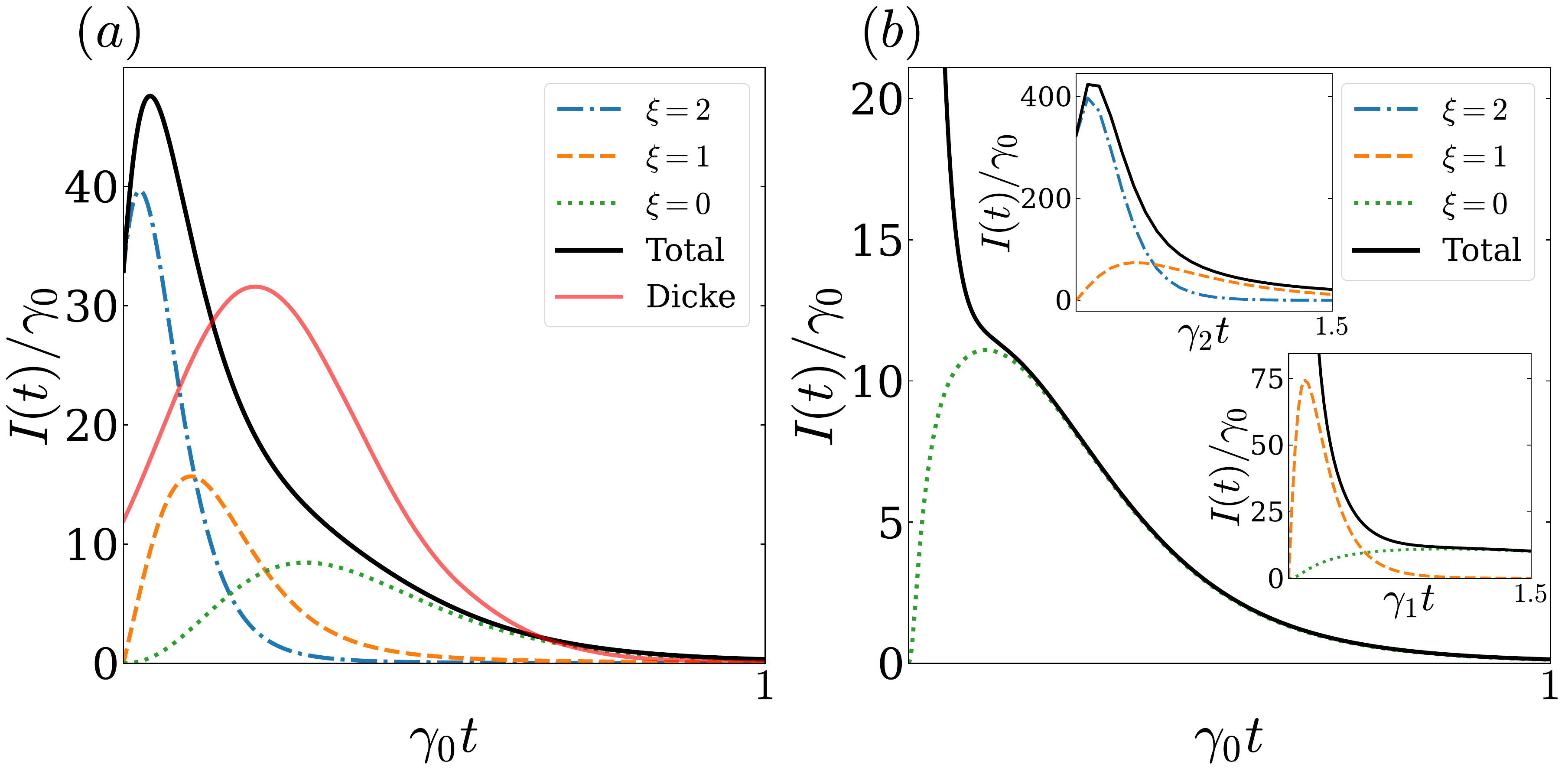}
\caption{
Superradiant bursts.
Left panel (a): Emission intensity for $J/\Delta=0.2$, showing the frequency-resolved channels $\xi=0,1,2$ and the total intensity, together with Dicke superradiance for reference, {whose decay rate is set equal to $\gamma_0$ to allow for a closer comparison with the dynamics of KC super-radiance}.  
Right panel (b): Strongly interacting regime ($J/\Delta=1$) in different time units. The main panel is plotted on the slowest time units ($ \gamma_0^{-1}$).
\textbf{Insets}: Intensities in two different time units,  $ \gamma_2^{-1}$ (upper inset), and $ \gamma_1^{-1}$ (bottom inset), highlighting the hierarchy $\gamma_2 \gg \gamma_1 \gg \gamma_0$ and the metastable plateau (bottom inset) before the decay (main panel). {Dynamics are simulated in exact numerics. Numerical parameters: $N =12$;  $\Gamma = 0.1$.}}

    \label{fig:bursts}
\end{figure}

\FloatBarrier

\begin{figure}[t]
    \centering
    \includegraphics[width=\linewidth]{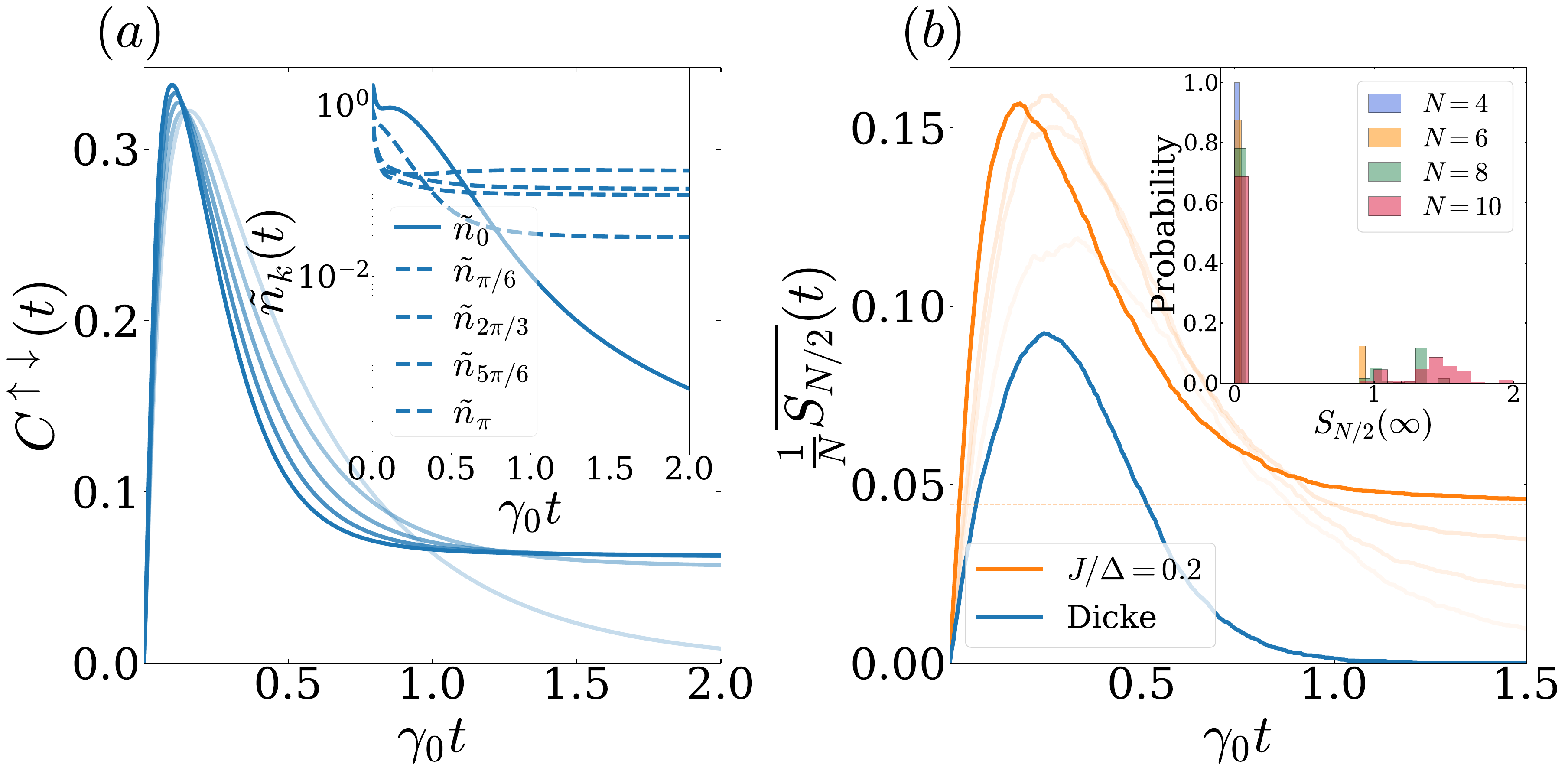}
    \caption{{
(a) Excitation dynamics. Non-vanishing antiferromagnetic correlations in the dark steady-state manifold serve as an entanglement witness.
\textbf{Inset:} Dynamics of $\tilde{n}_k(t)$ for different $k$-modes in a chain of $N=12$ spins with $J/\Delta = 0.2$. Curves with increasing $N=4,6,8,10 ,12$ correspond to increasing opacity of blue. Dynamics are extracted with exact diagonalization of the Liouvillian.
(b) Trajectory-level entanglement entropy. Time evolution of the half-system entanglement entropy for $N=10$, averaged over individual quantum-jump trajectories in the weakly interacting regime ($J/\Delta = 0.2$). Curves with increasing $N=4,6,8,10$ corresponds to increasing opacity.
\textbf{Inset:} Steady-state entanglement entropy distribution from $10^3$ trajectories for different system sizes. In all simulations we have set $\Gamma = 0.1$.}
}
    \label{fig:QP_ent}
\end{figure}

\textit{\textbf{Kinetically constrained superradiance}--} The collective decay generated by the constrained jump operators in
Eq.~\eqref{eq:Lind} produces a modified form of superradiance,
characterized by distinct kinetically constrained (KC) bursts and a
hierarchy of dissipative time scales.
We characterize this behavior through the frequency-resolved emission
rate $I_\xi(t)=\gamma_\xi\langle S^+_\xi S^-_\xi\rangle$, which quantifies the rate
at which photons of frequency $\omega_\xi$ are emitted.
This quantity is directly related to excitation number dynamics
$n=\sum_{j=1}^N n_j$, whose time derivative obeys
$\dot n = - \sum_{\xi  =0}^2  I_\xi(t) \equiv -I(t)$.
Starting for an initially fully inverted state  $\ket{\uparrow}^{\otimes N}$, we show the intensity of light
emitted at each frequency $\omega_\xi$ (dashed curves in
Fig.~\ref{fig:bursts} both panels), together with the total emission rate
$I(t)$ (solid black curves).
As shown in both panels of Fig.~\ref{fig:bursts}, the constrained jump operators give
rise to a hierarchy of emission channels, with superradiant peaks
ordered by decreasing frequency $\omega_\xi$ ($\xi=2,1,0$).
This ordering reflects the structure of KC emission: starting from a
fully inverted state, the early-time dynamics is dominated by decay
through $S^-_2$, while emission through $S^-_1$ and, subsequently,
$S^-_0$ emerges as the system dynamically accesses the corresponding
bright subspaces.

In both the weak and strong interaction regimes (cf. Fig.~\ref{fig:bursts}),   emission intensity  exhibits the familiar superradiant scaling
$I_{\mathrm{peak}}\propto N^2$, cf. \cite{SM} and Refs.~\cite{gross1982superradiance, Masson20, Masson24}.
This scaling reflect the superradiant response of a fully inverted
state, in which collective decay dynamically builds phase correlations across the atomic ensemble.
The subsequent evolution, however, differs markedly between the two
regimes. In the weak interactions regime, each dissipative channel have
a superradiant burst roughly at the same time [Fig.~\ref{fig:bursts} $(a)$, dashed curves contribute to a single peak to the total emission (solid black curve)]. Conversely, when interactions are sizable, the decay rates $\gamma_\xi = \Gamma(\Delta + \xi J)^3$ are no longer solely set by $\Delta$, leading to a marked separation of relaxation time scales. The total emission intensity has a strong short burst around $t\approx\gamma_2^{-1}$ [Fig \ref{fig:bursts} $(b)$ upper inset], and then develops
a metastable plateau on time scales comparable to $\gamma_1^{-1}$
[Fig.~\ref{fig:bursts} $(b)$ bottom inset], followed by decay on time scales given by $\gamma_0^{-1}$ [Fig. \ref{fig:bursts} $(b)$ main panel].
This distinction becomes even more pronounced in higher-dimensional
geometries, because the number of dissipative channels and their separations  {grows with} the lattice coordination number. Although the dynamics differ markedly between weak and strong interactions,
 the steady-state properties of KC superradiance are qualitatively independent
 of interaction strength. For simplicity, we therefore focus on the weakly
 interacting regime ($\Delta \gg J$) in the remainder of the paper; similar 
 conclusions would apply for $J \gtrsim \Delta$.\\
  
\textit{\textbf{Structure of the steady state}--} The coexistence of multiple dissipative channels associated with frequency-resolved emission breaks permutation invariance, in contrast to Dicke superradiance \cite{gross1982superradiance, Masson20, Masson24}, where all decay proceeds through a single symmetric channel. In the present case, dissipation becomes configuration dependent: each channel acts only on specific local spin excitation patterns  {[cf. Fig. \ref{fig:schem}]}. As  dynamics unfold, these spin configurations are selectively depleted, and the system can reach configurations that are dark to all decay channels. This {state dependent decay} prevents complete relaxation and stabilizes a non-trivial steady state with trapped excitations. To  {characterize} the structure of this steady state, we analyze the excitation dynamics {both in real and} in Fourier space. This provides direct access to how spin excitations are distributed and retained during the evolution. We introduce {the quantity $C^{\uparrow \downarrow} \equiv \frac{1}{N}\sum_{i = 1}^N\langle (1 - n_j) n_{j+1}\rangle$ detecting anti-ferromagnetic (AFM) correlations and} the Fourier-transformed lowering operator  {$\tilde{S}_k^- = \frac{1}{\sqrt{N}} \sum_j e^{- i k j} \sigma_j^-$}. The corresponding excitation number is
$\tilde n_k = \tilde{S}_k^{+}\tilde{S}_k^{-},$ with $\tilde{S}_k^{+} = (\tilde{S}_k^{-})^{\dagger}$.  {Each momentum mode captures   specific spatial patterns. For instance, the $k=0$ mode describes collective magnetization, while $k=\pi$ corresponds to staggered, antiferromagnetic patterns.} These observables have a natural interpretation as spin-wave–like excitations. 

As shown in  the inset of {Fig.~\ref{fig:QP_ent} $(a)$}, the system relaxes to a steady state with  finite excitation density. {The main panel highlights the dynamics of the AFM correlations in real space showing a finite steady state value. The inset highlights different $k$ modes dynamics. The $k=0$ mode reaches a pronounced maximum before decaying, while finite-momentum modes ($k\neq 0$) decay and then saturates towards  a nonzero plateau revealing a non-trivial steady-state.} This behavior follows from the structure of the dissipative dynamics.  Once the system reaches configurations with isolated excitations, only the $S_0^-$ channel remains active, and it   effectively behaves as a  collective lowering operator in an un-constrained system (i.e. conventional superradiance~\cite{gross1982superradiance}). However, because the earlier constrained dynamics has already driven the system out of the permutation-invariant sector, there are residual finite-momentum excitations that cannot be erased by $S_0^-$. These are the non-decaying steady state occupations plotted in the inset of {Fig.~\ref{fig:QP_ent} $(a)$}, which are 'dark' excitations decoupled from radiative processes. 


{The  dark steady-state manifold   has entanglement content.
Completeness of the local constraints implies $\sum_{\xi=0}^{2} S_{\xi}^{-} =    \sum_j \sigma_j^{-}\equiv S^{-}$,
such that any state dark to all channels in Eq.\eqref{eq:Lind} is also dark under the bare
collective lowering operator. As we show in~\cite{SM}, the
vacuum is the only separable common dark state.}
{Consequently, any   dark steady-state   that traps local excitations signaled by
 $C^{\uparrow \downarrow} > 0
$, must be entangled.  AFM correlations therefore serve as an entanglement witness, and they would be      measurable    via optical tweezers in cavity-QED realizations (see next Section).}

{To further characterize
entanglement we unravel   the open quantum dynamics into stochastic trajectories
\cite{Carmichael1993, Daley2014}. } 
For each  wave function trajectory, $\ket{\psi^s(t)}$,   with
$s =1, 2, .., M$, we compute the von Neumann entropy of half of the system 
\begin{equation}
S_{N/2}(\rho^s) = - \mathrm{Tr}\left(\rho^s_{N/2}\log \rho^s_{N/2}\right),
\end{equation}
where $\rho^s=\ket{\psi^s(t)}\bra{\psi^s(t)}$ and
$\rho^s_{N/2}=\mathrm{Tr}_{N/2}(\rho^s)$. To characterize entanglement
production, we consider the trajectory-averaged
entropy  {over an ensemble of $M$ stochastic realizations}
\begin{equation}
\overline{S_{N/2}}(t) = \frac{1}{M}\sum_{s=1}^{M} S_{N/2}(\rho^s(t)),
\end{equation}
which provides a statistical measure of bipartite entanglement generated by
the dissipative dynamics. {This unraveling corresponds to single photon counting}. 

In the main panel of {Fig.~\ref{fig:QP_ent} $(b)$} we show the trajectory-averaged half-chain {normalized} entanglement entropy  starting from a fully inverted state, comparing constrained and Dicke superradiance. Both cases display a transient entanglement peak; however, kinetically constrained superradiance yields a higher and broader maximum, indicating slower and more entangled transient dynamics. At late times, Dicke superradiance relaxes to a trivial, uncorrelated dark state, while constrained emission reaches a non-trivial steady state with finite entanglement that grows with the system size \cite{SM}. This is corroborated by the inset, which shows a bimodal distribution of single-trajectory steady-state entropies, with a peak at zero and a second peak around  {$S_{N/2}\simeq1.5$}, demonstrating a finite probability of generating  many-body correlations. {Notably, because the interaction and single-site terms commute in the Hamiltonian, Eq.~\eqref{eq:ham}, the resulting correlations are generated entirely by the emission channels in  accordance with the paradigm of dissipative state preparation}~\cite{diehl2008quantum,verstraete2009quantum}. 
These results are robust upon increasing system size. 
The   density of  excitations decoupled from radiation (in the steady state)  grow with $N$. Similar behavior is found for the statistics of trajectories supporting finite entanglement entropy (cf. {Fig \ref{fig:QP_ent} $(b)$, main plot, orange curves with varying opacity; for further details} see ~\cite{SM}).  Taken together, these results indicate that the dissipative constrained dynamics  sustain increasingly correlated quantum states for larger systems. \\

{\textit{\textbf{Prospects for a cavity-QED implementation}--} 
 Rydberg tweezer-arrays coupled to a single photonic cavity mode with decay rate $\kappa$ can provide an implementation of KC super-radiance.
The effective decay rates, $\gamma_\xi$, of the three kinetically constrained processes are proportional to    the spectral density of the photonic cavity mode evaluated at the frequency $ \omega_c-\omega_\xi $, where $\omega_c$ is the cavity detuned frequency. Such spectral density has a characteristic width proportional to  $\kappa$. 
Therefore, by tuning $\omega_c$ within the range $|\omega_c-\omega_\xi|\lesssim \kappa$,   one can control the value of the  decay rates, $\gamma_\xi$, and through the light emitted from the cavity    one can reconstruct the intensity burst plotted in Fig.~\ref{fig:bursts}.  
 We have also to satisfy the conditions $J\gtrsim \kappa$ and  
$J\gtrsim N_\xi\gamma_\xi^{\rm cav}$ in order to frequency-resolve  distinct
KC bursts; linewidths are set by the effective bright
population, $N_\xi$, of the corresponding KC radiative process
\cite{SM}. This implies a mild bound on the value of $N$, considering that current cavity-Rydbergs arrays are not yet fully scalable, and that $N_\xi$ can be significantly smaller than $N$ (the actual value depends on initial state chosen, and other non-universal aspects of dynamics).}

{A near-term route is to use direct Rydberg excitations in the recently
realized cavity-coupled $^{87}{\rm Rb}$ array of Ref.~\cite{de2026realization}.
That experiment reports $\kappa/2\pi=0.84(9)\,{\rm MHz}$ 
and a blockade radius $R_b\simeq4.8\,\mu{\rm m}$. Working slightly outside
the blockade radius,  gives   nearest
neighbour shifts $J/2\pi\sim0.7$--$2\,{\rm MHz}$ for
$d\simeq5$--$6\,\mu{\rm m}$. This is comparable to $\kappa$, opening up for the possibility to distinguish distinct KC emission channels. A complementary route would be  to employ Rydberg dressed states. As we discuss in Ref.~\cite{hosseinabadi2025pxp}, these platforms offer near-term prospects for 
$J/2\pi\simeq0.16\,{\rm MHz}$ and   $\kappa/2\pi\simeq0.1\,{\rm MHz}$, with  single-atom cooperativity   $C\simeq30$, which would further favor the possibility to resolve  different KC radiative channels.}\\

\textit{\textbf{Perspectives}--} A natural extension of our model consists in introducing an external coherent or incoherent drive field tilted with respect to the interaction direction in Eq.~\eqref{eq:ham}. This setting enables the exploration of new universality classes of kinetically constrained dissipative phase transitions. More generally, one may consider local, partially collective, or fully collective dephasing, pumping, and loss processes subject to kinetic constraints, extending our framework beyond constrained superradiance~\cite{sieberer2025universality,fazio2025many} (for a local version see for instance Ref.~\cite{Nill22}).

Related to this direction, the emergence of 'dark'   sectors of the Hilbert space, associated to the various kinetically constrained emission channels,  is suggestive of fragmentation, {and it could pave the way to dissipative counterparts of fractons, East models or scars~\cite{Turner_2018,Serbyn_2021,valencia2022kinetically,valencia2024rydberg,Moudgalya_2025,hosseinabadi2025pxp,nandkishore2019fractons,gromov2024colloquium} (for first steps towards experimental proposals see~\cite{winter2026extensive}).}

The   hierarchy of relaxational timescales generated by constrained collective decay provides a route  for   metastability. Certain many-body spin configurations can stay 'dark' and decoupled from radiation over long times (cf. plateau in inset of Fig.~\ref{fig:bursts}) opening  the possibility to protect many-body entangled states selected by dissipation, in a fashion reminiscent of  a   quantum memory. This also hints at the possibility to explore glassy-like dynamics~\cite{lubchenko2007theory} induced by cooperative emission, rather than by constrained (or disordered) coherent many body dynamics.

The possibility to implement a form of configuration-selective measurement, in which each emitted photon carries information about the local many-body spin environment of the emitter, can enable new forms of active dynamical steering beyond collective symmetric spin manifolds, which are currently at the forefront of   theoretical and experimental research efforts~\cite{schleier2010squeezing,yu2026efficient,roy2020measurement,grinkemeyer2025error,deist2022mid,cox2016deterministic,Marsh_spinglass2024,HH_short2024,HH_long2024,kroeze_spinglass2025}. Taken  together, these lines of inquiry point to a reformulation of many-body quantum optics in regimes where kinetic constraints break the intrinsic all-to-all structure of radiative processes {and are capable to promote entanglement, at variance with conventional super-radiance~\cite{qxx1-xr44,zhang2025unraveling,rosario2025unraveling}}. We aim to further advance this vision in future work. \\

\textit{\textbf{Acknowledgments}--} We acknowledge fruitful discussions with B. Buca, D. Chang, A. Clerk, J. Zeiher,  M. Stefanini, T. Thomay and R. Verresen. LFDP and JM acknowledge support from the CAS Dean's office at SUNY Buffalo.  Research was sponsored by the Army
Research Office and was accomplished under Grant Number W911NF-26-1-A176. The views and
conclusions contained in this document are those of the authors and should not be interpreted as
representing the official policies, either expressed or implied, of the Army Research Office or the U.
S. Government. The U.S. Government is authorized to reproduce and distribute reprints for
Government purposes notwithstanding any copyright notation herein”

This data was produced by State University of New York
(SUNY) at Buffalo under Army Research Office (ARO) Award Number W911NF-26-1-A176. ARO,
as the Federal awarding agency, reserves a royalty-free, nonexclusive and irrevocable right to
reproduce, publish, or otherwise use this data for Federal purposes, and to authorize others to do
so in accordance with 2 CFR 200.315(b).

The authors acknowledge the use of generative AI tools in the course of this work. AI assistance (Claude Fable 5, Anthropic) was used to independently check the entanglement witness applied to the dark states. AI assistance was also used for data readout and plotting scripts; no figure content was generated by AI. During manuscript preparation (winter 2025–summer 2026), the authors used AI tools (Claude Opus 4.8, Claude Opus 5, and Claude Fable 5, Anthropic; GPT-5.6 and earlier versions, OpenAI) for language editing, rephrasing, length control, typographical and consistency checking, and feedback on presentation and framing. AI tools were also used to identify potentially relevant prior results in the literature; all such results were checked by the authors and are cited to their original sources.
All AI-assisted calculations, checks, scripts, and outputs used in this work were independently benchmarked and validated by the authors. The physical analysis and interpretation are entirely the authors’ own. The authors take full responsibility for the content, accuracy, and conclusions of the final manuscript.\\

\textit{\textbf{Data availability}} The data and code required to reproduce the results presented in this work are available at  \cite{KCSR_zenodo}.
\bibliography{kc}

@misc{SM,
  note = {See Supplemental Material
  for additional superradiance scalings, finite-size scaling analysis,
  and derivation of the Lindblad master equation.}
}

@article{ferioli2023non,
  title={A non-equilibrium superradiant phase transition in free space},
  author={Ferioli, Giovanni and Glicenstein, Antoine and Ferrier-Barbut, Igor and Browaeys, Antoine},
  journal={Nature Physics},
  volume={19},
  number={9},
  pages={1345--1349},
  year={2023},
  publisher={Nature Publishing Group UK London}
}

@article{kim2025super_subradiant,
  author       = {Kim, Y. and Lanuza, A. and Schneble, D.},
  title        = {Super- and subradiant dynamics of quantum emitters mediated by atomic matter waves},
  journal      = {Nature Physics},
  volume       = {21},
  pages        = {70--76},
  year         = {2025},
  doi          = {10.1038/s41567-024-02676-w},
  publisher    = {Springer Nature}
}

@article{angerer2018superradiant,
  title={Superradiant emission from colour centres in diamond},
  author={Angerer, Andreas and Streltsov, Kirill and Astner, Thomas and Putz, Stefan and Sumiya, Hitoshi and Onoda, Shinobu and Isoya, Junichi and Munro, William J and Nemoto, Kae and Schmiedmayer, J{\"o}rg and others},
  journal={Nature Physics},
  volume={14},
  number={12},
  pages={1168--1172},
  year={2018},
  publisher={Nature Publishing Group UK London}
}

@article{liedl2024observation,
  title={Observation of superradiant bursts in a cascaded quantum system},
  author={Liedl, Christian and Tebbenjohanns, Felix and Bach, Constanze and Pucher, Sebastian and Rauschenbeutel, Arno and Schneeweiss, Philipp},
  journal={Physical Review X},
  volume={14},
  number={1},
  pages={011020},
  year={2024},
  publisher={APS}
}

@article{fazio2025many,
  title={Many-body open quantum systems},
  author={Fazio, Rosario and Keeling, Jonathan and Mazza, Leonardo and Schir{\`o}, Marco},
  journal={SciPost Physics Lecture Notes},
  pages={099},
  year={2025}
}

@article{sieberer2025universality,
  title={Universality in driven open quantum matter},
  author={Sieberer, Lukas M and Buchhold, Michael and Marino, Jamir and Diehl, Sebastian},
  journal={Reviews of Modern Physics},
  volume={97},
  number={2},
  pages={025004},
  year={2025},
  publisher={APS}
}

@article{kersten2026self,
	author = {Kersten, Wenzel and de Zordo, Nikolaus and Diekmann, Oliver and others},
	date = {2026/01/01},
	doi = {10.1038/s41567-025-03123-0},
	id = {Kersten2026},
	isbn = {1745-2481},
	journal = {Nature Physics},
	number = {1},
	pages = {158--163},
	title = {Self-induced superradiant masing},
	url = {https://doi.org/10.1038/s41567-025-03123-0},
	volume = {22},
	year = {2026}}

@article{hosseinabadi2025pxp,
  title = {Kinetically Constrained Cavity QED: From Blockaded Ferromagnetism to Long-Range Quantum Scars},
  author = {Hosseinabadi, Hossein and Valencia-Tortora, Riccardo J. and Mikheev, Aleksandr N. and Chang, Darrick E. and Zeiher, Johannes and Moessner, Roderich and Marino, Jamir},
  journal = {PRX Quantum},
  volume = {7},
  issue = {2},
  pages = {020357},
  numpages = {20},
  year = {2026},
  month = {Jun},
  publisher = {American Physical Society},
  doi = {10.1103/586v-sxqr},
  url = {https://link.aps.org/doi/10.1103/586v-sxqr}
}

@article{
kroeze_spinglass2025,
author = {Ronen M. Kroeze  and Brendan P. Marsh  and David Atri Schuller  and others },
title = {Directly observing replica symmetry breaking in a vector quantum-optical spin glass},
journal = {Science},
volume = {389},
number = {6765},
pages = {1122-1126},
year = {2025},
doi = {10.1126/science.adu7710},
URL = {https://www.science.org/doi/abs/10.1126/science.adu7710}}

@article{Marsh_spinglass2024,
  title = {Entanglement and Replica Symmetry Breaking in a Driven-Dissipative Quantum Spin Glass},
  author = {Marsh, Brendan P. and Kroeze, Ronen M. and Ganguli, Surya and Gopalakrishnan, Sarang and Keeling, Jonathan and Lev, Benjamin L.},
  journal = {Phys. Rev. X},
  volume = {14},
  issue = {1},
  pages = {011026},
  numpages = {24},
  year = {2024},
  month = {Feb},
  publisher = {American Physical Society},
  doi = {10.1103/PhysRevX.14.011026},
  url = {https://link.aps.org/doi/10.1103/PhysRevX.14.011026}
}

@article{HH_short2024,
  title = {Quantum-to-classical crossover in the spin glass dynamics of cavity QED simulators},
  author = {Hosseinabadi, Hossein and Chang, Darrick E. and Marino, Jamir},
  journal = {Phys. Rev. Res.},
  volume = {6},
  issue = {4},
  pages = {043313},
  numpages = {7},
  year = {2024},
  month = {Dec},
  publisher = {American Physical Society},
  doi = {10.1103/PhysRevResearch.6.043313},
  url = {https://link.aps.org/doi/10.1103/PhysRevResearch.6.043313}
}

@article{HH_long2024,
  title = {Far from equilibrium field theory for strongly coupled light and matter: Dynamics of frustrated multimode cavity QED},
  author = {Hosseinabadi, Hossein and Chang, Darrick E. and Marino, Jamir},
  journal = {Phys. Rev. Res.},
  volume = {6},
  issue = {4},
  pages = {043314},
  numpages = {37},
  year = {2024},
  month = {Dec},
  publisher = {American Physical Society},
  doi = {10.1103/PhysRevResearch.6.043314},
  url = {https://link.aps.org/doi/10.1103/PhysRevResearch.6.043314}
}

@article{diehl2008quantum,
  title={Quantum states and phases in driven open quantum systems with cold atoms},
  author={Diehl, Sebastian and Micheli, A and Kantian, Adrian and Kraus, B and B{\"u}chler, HP and Zoller, Peter},
  journal={Nature Physics},
  volume={4},
  number={11},
  pages={878--883},
  year={2008},
  publisher={Nature Publishing Group UK London}
}

@article{verstraete2009quantum,
  title={Quantum computation and quantum-state engineering driven by dissipation},
  author={Verstraete, Frank and Wolf, Michael M and Ignacio Cirac, J},
  journal={Nature physics},
  volume={5},
  number={9},
  pages={633--636},
  year={2009},
  publisher={Nature Publishing Group UK London}
}

@article{harrington2022engineered,
  title={Engineered dissipation for quantum information science},
  author={Harrington, Patrick M and Mueller, Erich J and Murch, Kater W},
  journal={Nature Reviews Physics},
  volume={4},
  number={10},
  pages={660--671},
  year={2022},
  publisher={Nature Publishing Group UK London}
}

@article{lin2013dissipative,
  title={Dissipative production of a maximally entangled steady state of two quantum bits},
  author={Lin, Yiheng and Gaebler, JP and Reiter, Florentin and Tan, Ting Rei and Bowler, Ryan and S{\o}rensen, AS and Leibfried, Dietrich and Wineland, David J},
  journal={Nature},
  volume={504},
  number={7480},
  pages={415--418},
  year={2013},
  publisher={Nature Publishing Group UK London}
}

@article{stannigel2012driven,
  title={Driven-dissipative preparation of entangled states in cascaded quantum-optical networks},
  author={Stannigel, Kai and Rabl, Peter and Zoller, Peter},
  journal={New Journal of Physics},
  volume={14},
  number={6},
  pages={063014},
  year={2012},
  publisher={IOP Publishing}
}

@article{budich2015dissipative,
  title={Dissipative preparation of Chern insulators},
  author={Budich, Jan Carl and Zoller, Peter and Diehl, Sebastian},
  journal={Physical Review A},
  volume={91},
  number={4},
  pages={042117},
  year={2015},
  publisher={APS}
}

@article{reiter2016scalable,
  title={Scalable dissipative preparation of many-body entanglement},
  author={Reiter, Florentin and Reeb, David and S{\o}rensen, Anders S},
  journal={Physical review letters},
  volume={117},
  number={4},
  pages={040501},
  year={2016},
  publisher={APS}
}

@article{plankensteiner2015selective,
  title={Selective protected state preparation of coupled dissipative quantum emitters},
  author={Plankensteiner, David and Ostermann, Laurin and Ritsch, Helmut and Genes, Claudiu},
  journal={Scientific reports},
  volume={5},
  number={1},
  pages={16231},
  year={2015},
  publisher={Nature Publishing Group UK London}
}

@article{marr2003entangled,
  title={Entangled-state preparation via dissipation-assisted adiabatic passages},
  author={Marr, Carsten and Beige, Almut and Rempe, Gerhard},
  journal={Physical Review A},
  volume={68},
  number={3},
  pages={033817},
  year={2003},
  publisher={APS}
}

@article{pocklington2025accelerating,
  title={Accelerating dissipative state preparation with adaptive open quantum dynamics},
  author={Pocklington, Andrew and Clerk, Aashish A},
  journal={Physical Review Letters},
  volume={134},
  number={5},
  pages={050603},
  year={2025},
  publisher={APS}
}

@article{kastoryano2011dissipative,
  title={Dissipative preparation of entanglement in optical cavities},
  author={Kastoryano, Michael James and Reiter, Florentin and S{\o}rensen, Anders S{\o}ndberg},
  journal={Physical review letters},
  volume={106},
  number={9},
  pages={090502},
  year={2011},
  publisher={APS}
}

@article{mi2024stable,
  title={Stable quantum-correlated many-body states through engineered dissipation},
  author={Mi, Xiao and Michailidis, AA and Shabani, Sara and Miao, KC and Klimov, PV and Lloyd, J and Rosenberg, E and Acharya, R and Aleiner, I and Andersen, TI and others},
  journal={Science},
  volume={383},
  number={6689},
  pages={1332--1337},
  year={2024},
  publisher={American Association for the Advancement of Science}
}

@article{ma2019dissipatively,
  title={A dissipatively stabilized Mott insulator of photons},
  author={Ma, Ruichao and Saxberg, Brendan and Owens, Clai and Leung, Nelson and Lu, Yao and Simon, Jonathan and Schuster, David I},
  journal={Nature},
  volume={566},
  number={7742},
  pages={51--57},
  year={2019},
  publisher={Nature Publishing Group UK London}
}

@book{breuer2002,
  title     = {The Theory of Open Quantum Systems},
  author    = {Heinz-Peter Breuer and Francesco Petruccione},
  year      = {2002},
  publisher = {Oxford University Press},
  month     = aug,
}

@article{stefanini2025lindblad,
  title={Is Lindblad for me?},
  author={Stefanini, Martino and Ziolkowska, Aleksandra A and Budker, Dmitry and Poschinger, Ulrich and Schmidt-Kaler, Ferdinand and Browaeys, Antoine and Imamoglu, Atac and Chang, Darrick and Marino, Jamir},
  journal={arXiv preprint arXiv:2506.22436},
  year={2025}
}

@article{cox2016deterministic,
  title={Deterministic squeezed states with collective measurements and feedback},
  author={Cox, Kevin C and Greve, Graham P and Weiner, Joshua M and Thompson, James K},
  journal={Physical review letters},
  volume={116},
  number={9},
  pages={093602},
  year={2016},
  publisher={APS}
}

@article{de2026realization,
  title={Realization of a cavity-coupled Rydberg array},
  author={De Santis, Jacopo and Dura-Kov{\'a}cs, Bal{\'a}zs and {\"O}nc{\"u}, Mehmet and Bouscal, Adrien and Vasileiadis, Dimitrios and Zeiher, Johannes},
  journal={arXiv preprint arXiv:2602.12152},
  year={2026}
}

@article{deist2022mid,
  title={Mid-circuit cavity measurement in a neutral atom array},
  author={Deist, Emma and Lu, Yue-Hui and Ho, Jacquelyn and Pasha, Mary Kate and Zeiher, Johannes and Yan, Zhenjie and Stamper-Kurn, Dan M},
  journal={Physical Review Letters},
  volume={129},
  number={20},
  pages={203602},
  year={2022},
  publisher={APS}
}

@article{schleier2010squeezing,
  title={Squeezing the collective spin of a dilute atomic ensemble by cavity feedback},
  author={Schleier-Smith, Monika H and Leroux, Ian D and Vuleti{\'c}, Vladan},
  journal={Physical Review A—Atomic, Molecular, and Optical Physics},
  volume={81},
  number={2},
  pages={021804},
  year={2010},
  publisher={APS}
}

@article{grinkemeyer2025error,
  title={Error-detected quantum operations with neutral atoms mediated by an optical cavity},
  author={Grinkemeyer, Brandon and Guardado-Sanchez, Elmer and Dimitrova, Ivana and Shchepanovich, Danilo and Mandopoulou, G Eirini and Borregaard, Johannes and Vuleti{\'c}, Vladan and Lukin, Mikhail D},
  journal={Science},
  volume={387},
  number={6740},
  pages={1301--1305},
  year={2025},
  publisher={American Association for the Advancement of Science}
}

@article{roy2020measurement,
  title={Measurement-induced steering of quantum systems},
  author={Roy, Sthitadhi and Chalker, JT and Gornyi, IV and Gefen, Yuval},
  journal={Physical review research},
  volume={2},
  number={3},
  pages={033347},
  year={2020},
  publisher={APS}
}

@article{yu2026efficient,
  title={Efficient preparation of Dicke states},
  author={Yu, Jeffery and Muleady, Sean R and Wang, Yu-Xin and Schine, Nathan and Gorshkov, Alexey V and Childs, Andrew M},
  journal={Physical Review Letters},
  volume={136},
  number={3},
  pages={030601},
  year={2026},
  publisher={APS}
}

@article{valencia2024rydberg,
  title={Rydberg platform for nonergodic chiral quantum dynamics},
  author={Valencia-Tortora, Riccardo J and Pancotti, Nicola and Fleischhauer, Michael and Bernien, Hannes and Marino, Jamir},
  journal={Physical Review Letters},
  volume={132},
  number={22},
  pages={223201},
  year={2024},
  publisher={APS}
}

@article{valencia2022kinetically,
  title={Kinetically constrained quantum dynamics in superconducting circuits},
  author={Valencia-Tortora, Riccardo J and Pancotti, Nicola and Marino, Jamir},
  journal={PRX Quantum},
  volume={3},
  number={2},
  pages={020346},
  year={2022},
  publisher={APS}
}

@article{gromov2024colloquium,
  title={Colloquium: fracton matter},
  author={Gromov, Andrey and Radzihovsky, Leo},
  journal={Reviews of Modern Physics},
  volume={96},
  number={1},
  pages={011001},
  year={2024},
  publisher={APS}
}

@article{nandkishore2019fractons,
  title={Fractons},
  author={Nandkishore, Rahul M and Hermele, Michael},
  journal={Annual Review of Condensed Matter Physics},
  volume={10},
  number={1},
  pages={295--313},
  year={2019},
  publisher={Annual Reviews}
}

@article{lubchenko2007theory,
  title={Theory of structural glasses and supercooled liquids},
  author={Lubchenko, Vassiliy and Wolynes, Peter G},
  journal={Annu. Rev. Phys. Chem.},
  volume={58},
  number={1},
  pages={235--266},
  year={2007},
  publisher={Annual Reviews}
}

@article{gross1982superradiance,
  title={Superradiance: An essay on the theory of collective spontaneous emission},
  author={Gross, Michel and Haroche, Serge},
  journal={Physics reports},
  volume={93},
  number={5},
  pages={301--396},
  year={1982},
  publisher={Elsevier}
}

@article{Nill22,
  title = {Many-Body Radiative Decay in Strongly Interacting Rydberg Ensembles},
  author = {Nill, Chris and Brandner, Kay and Olmos, Beatriz and Carollo, Federico and Lesanovsky, Igor},
  journal = {Phys. Rev. Lett.},
  volume = {129},
  issue = {24},
  pages = {243202},
  numpages = {6},
  year = {2022},
  month = {Dec},
  publisher = {American Physical Society},
  doi = {10.1103/PhysRevLett.129.243202},
  url = {https://link.aps.org/doi/10.1103/PhysRevLett.129.243202}
}

@book{Carmichael1993,
  author    = {Carmichael, Howard J.},
  title     = {An Open Systems Approach to Quantum Optics},
  publisher = {Springer},
  address   = {Berlin},
  year      = {1993},
}

@article{Daley2014,
  author  = {Daley, Andrew J.},
  title   = {Quantum trajectories and open many-body quantum systems},
  journal = {Advances in Physics},
  volume  = {63},
  number  = {2},
  pages   = {77--149},
  year    = {2014},
  doi     = {10.1080/00018732.2014.933502},
}

@book{NielsenChuang2010,
  author    = {Nielsen, Michael A. and Chuang, Isaac L.},
  title     = {Quantum Computation and Quantum Information},
  publisher = {Cambridge University Press},
  address   = {Cambridge},
  year      = {2010},
  edition   = {10th Anniversary Edition},
}

@article{Dicke1954,
  author  = {Dicke, Robert H.},
  title   = {Coherence in Spontaneous Radiation Processes},
  journal = {Physical Review},
  volume  = {93},
  number  = {1},
  pages   = {99–110},
  year    = {1954},
  doi     = {10.1103/PhysRev.93.99},
}

@article{Masson20,
  title = {Many-Body Signatures of Collective Decay in Atomic Chains},
  author = {Masson, Stuart J. and Ferrier-Barbut, Igor and Orozco, Luis A. and Browaeys, Antoine and Asenjo-Garcia, Ana},
  journal = {Phys. Rev. Lett.},
  volume = {125},
  issue = {26},
  pages = {263601},
  numpages = {7},
  year = {2020},
  month = {Dec},
  publisher = {American Physical Society},
  doi = {10.1103/PhysRevLett.125.263601},
  url = {https://link.aps.org/doi/10.1103/PhysRevLett.125.263601}
}

@article{Masson24,
  title = {Dicke Superradiance in Ordered Arrays of Multilevel Atoms},
  author = {Masson, Stuart J. and Covey, Jacob P. and Will, Sebastian and Asenjo-Garcia, Ana},
  journal = {PRX Quantum},
  volume = {5},
  issue = {1},
  pages = {010344},
  numpages = {19},
  year = {2024},
  month = {Mar},
  publisher = {American Physical Society},
  doi = {10.1103/PRXQuantum.5.010344},
  url = {https://link.aps.org/doi/10.1103/PRXQuantum.5.010344}
}

@article{qxx1-xr44,
  title = {Absence of entanglement growth in Dicke superradiance},
  author = {Bassler, N. S.},
  journal = {Phys. Rev. A},
  volume = {112},
  issue = {5},
  pages = {053713},
  numpages = {16},
  year = {2025},
  month = {Nov},
  publisher = {American Physical Society},
  doi = {10.1103/qxx1-xr44},
  url = {https://link.aps.org/doi/10.1103/qxx1-xr44}
}

@article{rosario2025unraveling,
  title = {Unraveling Dicke Superradiant Decay with Separable Coherent Spin States},
  author = {Rosario, P. and Solak, L. O. R. and Cidrim, A. and Bachelard, R. and Schachenmayer, J.},
  journal = {Phys. Rev. Lett.},
  volume = {135},
  issue = {13},
  pages = {133602},
  numpages = {7},
  year = {2025},
  month = {Sep},
  publisher = {American Physical Society},
  doi = {10.1103/xcxr-sm9c},
  url = {https://link.aps.org/doi/10.1103/xcxr-sm9c}
}

@article{zhang2025unraveling,
  title={Unraveling superradiance: Entanglement and mutual information in collective decay},
  author={Zhang, Xin HH and Malz, Daniel and Rabl, Peter},
  journal={Physical Review Letters},
  volume={135},
  number={3},
  pages={033602},
  year={2025},
  publisher={APS}
}

@article{winter2026extensive,
  title={Extensive mixed-state entanglement in kinetically constrained superradiance},
  author={Winter, Lucas and Kumlin, Jan and Pohl, Thomas and Nunnenkamp, Andreas},
  journal={arXiv preprint arXiv:2605.16131},
  year={2026}
}

@article{Kumlin2025,
  title = {Superradiance of Strongly Interacting Dipolar Excitons in Moir\'e Quantum Materials},
  author = {Kumlin, Jan and Srivastava, Ajit and Pohl, Thomas},
  journal = {Phys. Rev. Lett.},
  volume = {134},
  issue = {12},
  pages = {126901},
  numpages = {8},
  year = {2025},
  month = {Mar},
  publisher = {American Physical Society},
  doi = {10.1103/PhysRevLett.134.126901},
  url = {https://link.aps.org/doi/10.1103/PhysRevLett.134.126901}
}

@article{Li2025,
  title = {Solid-state platform for cooperative quantum dynamics driven by correlated emission},
  author = {Li, Xin and Marino, Jamir and Chang, Darrick E. and Flebus, Benedetta},
  journal = {Phys. Rev. B},
  volume = {111},
  issue = {6},
  pages = {064424},
  numpages = {15},
  year = {2025},
  month = {Feb},
  publisher = {American Physical Society},
  doi = {10.1103/PhysRevB.111.064424},
  url = {https://link.aps.org/doi/10.1103/PhysRevB.111.064424}
}

@article{Malz2022,
  title = {Large-$N$ limit of Dicke superradiance},
  author = {Malz, Daniel and Trivedi, Rahul and Cirac, J. Ignacio},
  journal = {Phys. Rev. A},
  volume = {106},
  issue = {1},
  pages = {013716},
  numpages = {15},
  year = {2022},
  month = {Jul},
  publisher = {American Physical Society},
  doi = {10.1103/PhysRevA.106.013716},
  url = {https://link.aps.org/doi/10.1103/PhysRevA.106.013716}
}

@article{Lambert2016,
  title = {Superradiance with an ensemble of superconducting flux qubits},
  author = {Lambert, Neill and Matsuzaki, Yuichiro and Kakuyanagi, Kosuke and Ishida, Natsuko and Saito, Shiro and Nori, Franco},
  journal = {Phys. Rev. B},
  volume = {94},
  issue = {22},
  pages = {224510},
  numpages = {8},
  year = {2016},
  month = {Dec},
  publisher = {American Physical Society},
  doi = {10.1103/PhysRevB.94.224510},
  url = {https://link.aps.org/doi/10.1103/PhysRevB.94.224510}
}

@article{Clopez2023,
  title = {Many-Body Superradiance and Dynamical Mirror Symmetry Breaking in Waveguide QED},
  author = {Cardenas-Lopez, Silvia and Masson, Stuart J. and Zager, Zoe and Asenjo-Garcia, Ana},
  journal = {Phys. Rev. Lett.},
  volume = {131},
  issue = {3},
  pages = {033605},
  numpages = {7},
  year = {2023},
  month = {Jul},
  publisher = {American Physical Society},
  doi = {10.1103/PhysRevLett.131.033605},
  url = {https://link.aps.org/doi/10.1103/PhysRevLett.131.033605}
}

@article{Lohof2023,
  title = {Signatures of Superradiance as a Witness to Multipartite Entanglement},
  author = {Lohof, Frederik and Schumayer, Daniel and Hutchinson, David A. W. and Gies, Christopher},
  journal = {Phys. Rev. Lett.},
  volume = {131},
  issue = {6},
  pages = {063601},
  numpages = {6},
  year = {2023},
  month = {Aug},
  publisher = {American Physical Society},
  doi = {10.1103/PhysRevLett.131.063601},
  url = {https://link.aps.org/doi/10.1103/PhysRevLett.131.063601}
}

@article{Windt2025,
  title = {Effects of Retardation on Many-Body Superradiance in Chiral Waveguide QED},
  author = {Windt, Bennet and Bello, Miguel and Malz, Daniel and Cirac, J. Ignacio},
  journal = {Phys. Rev. Lett.},
  volume = {134},
  issue = {17},
  pages = {173601},
  numpages = {8},
  year = {2025},
  month = {Apr},
  publisher = {American Physical Society},
  doi = {10.1103/PhysRevLett.134.173601},
  url = {https://link.aps.org/doi/10.1103/PhysRevLett.134.173601}
}

@article{RBigorda2022,
  title = {Superradiance and subradiance in inverted atomic arrays},
  author = {Rubies-Bigorda, Oriol and Yelin, Susanne F.},
  journal = {Phys. Rev. A},
  volume = {106},
  issue = {5},
  pages = {053717},
  numpages = {12},
  year = {2022},
  month = {Nov},
  publisher = {American Physical Society},
  doi = {10.1103/PhysRevA.106.053717},
  url = {https://link.aps.org/doi/10.1103/PhysRevA.106.053717}
}

@article{Kirton2017,
  title = {Suppressing and Restoring the Dicke Superradiance Transition by Dephasing and Decay},
  author = {Kirton, Peter and Keeling, Jonathan},
  journal = {Phys. Rev. Lett.},
  volume = {118},
  issue = {12},
  pages = {123602},
  numpages = {5},
  year = {2017},
  month = {Mar},
  publisher = {American Physical Society},
  doi = {10.1103/PhysRevLett.118.123602},
  url = {https://link.aps.org/doi/10.1103/PhysRevLett.118.123602}
}

@article{Andolina2024,
  title = {Dicke superradiant heat current enhancement in circuit quantum electrodynamics},
  author = {Andolina, Gian Marcello and Erdman, Paolo Andrea and No\'e, Frank and Pekola, Jukka and Schir\`o, Marco},
  journal = {Phys. Rev. Res.},
  volume = {6},
  issue = {4},
  pages = {043128},
  numpages = {17},
  year = {2024},
  month = {Nov},
  publisher = {American Physical Society},
  doi = {10.1103/PhysRevResearch.6.043128},
  url = {https://link.aps.org/doi/10.1103/PhysRevResearch.6.043128}
}

@article{DeVoe_SR1996,
  title = {Observation of Superradiant and Subradiant Spontaneous Emission of Two Trapped Ions},
  author = {DeVoe, R. G. and Brewer, R. G.},
  journal = {Phys. Rev. Lett.},
  volume = {76},
  issue = {12},
  pages = {2049--2052},
  numpages = {0},
  year = {1996},
  month = {Mar},
  publisher = {American Physical Society},
  doi = {10.1103/PhysRevLett.76.2049},
  url = {https://link.aps.org/doi/10.1103/PhysRevLett.76.2049}
}

@article{Guerin_SR2016,
  title = {Subradiance in a Large Cloud of Cold Atoms},
  author = {Guerin, William and Ara\'ujo, Michelle O. and Kaiser, Robin},
  journal = {Phys. Rev. Lett.},
  volume = {116},
  issue = {8},
  pages = {083601},
  numpages = {5},
  year = {2016},
  month = {Feb},
  publisher = {American Physical Society},
  doi = {10.1103/PhysRevLett.116.083601},
  url = {https://link.aps.org/doi/10.1103/PhysRevLett.116.083601}
}

@article{Goban_WaveGuideSR2015,
  title = {Superradiance for Atoms Trapped along a Photonic Crystal Waveguide},
  author = {Goban, A. and Hung, C.-L. and Hood, J. D. and Yu, S.-P. and Muniz, J. A. and Painter, O. and Kimble, H. J.},
  journal = {Phys. Rev. Lett.},
  volume = {115},
  issue = {6},
  pages = {063601},
  numpages = {5},
  year = {2015},
  month = {Aug},
  publisher = {American Physical Society},
  doi = {10.1103/PhysRevLett.115.063601},
  url = {https://link.aps.org/doi/10.1103/PhysRevLett.115.063601}
}

@article{Solano_WaveGuideSR2016,
	author = {Solano, P. and Barberis-Blostein, P. and Fatemi, F. K. and Orozco, L. A. and Rolston, S. L.},
	date = {2017/11/30},
	doi = {10.1038/s41467-017-01994-3},
	id = {Solano2017},
	isbn = {2041-1723},
	journal = {Nature Communications},
	number = {1},
	pages = {1857},
	title = {Super-radiance reveals infinite-range dipole interactions through a nanofiber},
	url = {https://doi.org/10.1038/s41467-017-01994-3},
	volume = {8},
	year = {2017}}

@article{Sheremet_WaveGuide2023,
  title = {Waveguide quantum electrodynamics: Collective radiance and photon-photon correlations},
  author = {Sheremet, Alexandra S. and Petrov, Mihail I. and Iorsh, Ivan V. and Poshakinskiy, Alexander V. and Poddubny, Alexander N.},
  journal = {Rev. Mod. Phys.},
  volume = {95},
  issue = {1},
  pages = {015002},
  numpages = {59},
  year = {2023},
  month = {Mar},
  publisher = {American Physical Society},
  doi = {10.1103/RevModPhys.95.015002},
  url = {https://link.aps.org/doi/10.1103/RevModPhys.95.015002}
}

@article{
Tiranov_WaveGuideSR2023,
author = {Alexey Tiranov  and Vasiliki Angelopoulou  and Cornelis Jacobus van Diepen  and Björn Schrinski  and Oliver August Dall’Alba Sandberg  and Ying Wang  and Leonardo Midolo  and Sven Scholz  and Andreas Dirk Wieck  and Arne Ludwig  and Anders Søndberg Sørensen  and Peter Lodahl },
title = {Collective super- and subradiant dynamics between distant optical quantum emitters},
journal = {Science},
volume = {379},
number = {6630},
pages = {389-393},
year = {2023},
doi = {10.1126/science.ade9324},
URL = {https://www.science.org/doi/abs/10.1126/science.ade9324}}

@article{Pallmann_NVSR2024,
  title = {Cavity-Mediated Collective Emission from Few Emitters in a Diamond Membrane},
  author = {Pallmann, Maximilian and K\"oster, Kerim and Zhang, Yuan and Heupel, Julia and Eichhorn, Timon and Popov, Cyril and M\o{}lmer, Klaus and Hunger, David},
  journal = {Phys. Rev. X},
  volume = {14},
  issue = {4},
  pages = {041055},
  numpages = {14},
  year = {2024},
  month = {Dec},
  publisher = {American Physical Society},
  doi = {10.1103/PhysRevX.14.041055},
  url = {https://link.aps.org/doi/10.1103/PhysRevX.14.041055}
}

@article{Masson_Universality2022,
	author = {Masson, Stuart J. and Asenjo-Garcia, Ana},
	date = {2022/04/27},
	doi = {10.1038/s41467-022-29805-4},
	id = {Masson2022},
	isbn = {2041-1723},
	journal = {Nature Communications},
	number = {1},
	pages = {2285},
	title = {Universality of Dicke superradiance in arrays of quantum emitters},
	url = {https://doi.org/10.1038/s41467-022-29805-4},
	volume = {13},
	year = {2022}}

@article{Moudgalya_2025,
  title = {Hilbert Space Fragmentation and Commutant Algebras},
  author = {Moudgalya, Sanjay and Motrunich, Olexei I.},
  journal = {Phys. Rev. X},
  volume = {12},
  issue = {1},
  pages = {011050},
  numpages = {44},
  year = {2022},
  month = {Mar},
  publisher = {American Physical Society},
  doi = {10.1103/PhysRevX.12.011050},
  url = {https://link.aps.org/doi/10.1103/PhysRevX.12.011050}
}

@article{Serbyn_2021,
	author = {Serbyn, Maksym and Abanin, Dmitry A. and Papi{\'c}, Zlatko},
	date = {2021/06/01},
	doi = {10.1038/s41567-021-01230-2},
	id = {Serbyn2021},
	isbn = {1745-2481},
	journal = {Nature Physics},
	number = {6},
	pages = {675--685},
	title = {Quantum many-body scars and weak breaking of ergodicity},
	url = {https://doi.org/10.1038/s41567-021-01230-2},
	volume = {17},
	year = {2021}}

@article{Turner_2018,
	author = {Turner, C. J. and Michailidis, A. A. and Abanin, D. A. and Serbyn, M. and Papi{\'c}, Z.},
	date = {2018/07/01},
	doi = {10.1038/s41567-018-0137-5},
	id = {Turner2018},
	isbn = {1745-2481},
	journal = {Nature Physics},
	number = {7},
	pages = {745--749},
	title = {Weak ergodicity breaking from quantum many-body scars},
	url = {https://doi.org/10.1038/s41567-018-0137-5},
	volume = {14},
	year = {2018}}

@article{Agarwal_directional2024,
  title = {Directional Superradiance in a Driven Ultracold Atomic Gas in Free Space},
  author = {Agarwal, Sanaa and Chaparro, Edwin and Barberena, Diego and Orioli, A. Pi\~neiro and Ferioli, G. and Pancaldi, S. and Ferrier-Barbut, I. and Browaeys, A. and Rey, A.M.},
  journal = {PRX Quantum},
  volume = {5},
  issue = {4},
  pages = {040335},
  numpages = {31},
  year = {2024},
  month = {Dec},
  publisher = {American Physical Society},
  doi = {10.1103/PRXQuantum.5.040335},
  url = {https://link.aps.org/doi/10.1103/PRXQuantum.5.040335}
}

@article{Sinha_nonMarkov2020,
  title = {Non-Markovian Collective Emission from Macroscopically Separated Emitters},
  author = {Sinha, Kanupriya and Meystre, Pierre and Goldschmidt, Elizabeth A. and Fatemi, Fredrik K. and Rolston, S. L. and Solano, Pablo},
  journal = {Phys. Rev. Lett.},
  volume = {124},
  issue = {4},
  pages = {043603},
  numpages = {7},
  year = {2020},
  month = {Jan},
  publisher = {American Physical Society},
  doi = {10.1103/PhysRevLett.124.043603},
  url = {https://link.aps.org/doi/10.1103/PhysRevLett.124.043603}
}

@misc{KCSR_zenodo,
  author    = {dos Prazeres, Luis Fernando and Hosseinabadi, Hossein and Marino, Jamir},
  title     = {{Dataset and plotting scripts for "Kinetically constrained superradiance"}},
  year      = {2026},
  publisher = {Zenodo},
  doi       = {10.5281/zenodo.21225843},
  url       = {https://doi.org/10.5281/zenodo.21225843}
}

\clearpage
\onecolumngrid

\clearpage
\onecolumngrid

\begin{center}
{\large \textbf{Supplemental Material}}\\[0.5em]
{\large \textbf{Kinetically constrained superradiance}}\\[1.0em]

\end{center}

\vspace{1em}

\setcounter{equation}{0}
\renewcommand{\theequation}{S\arabic{equation}}

\setcounter{figure}{0}
\renewcommand{\thefigure}{S\arabic{figure}}

\section{Microscopic derivation}

We consider a one-dimensional array of $N$ two-level atoms at fixed positions,
interacting via nearest-neighbor interactions with periodic boundary conditions and coupled to the
electromagnetic vacuum. The total Hamiltonian is
\begin{equation}
H = H_A + H_{EM} + H_I ,
\end{equation}
with
\begin{equation}
H_A = \Delta\sum_{i=1}^N n_i
+ J \sum_{\langle ij\rangle} n_i n_j,
\qquad
H_{EM} = \sum_{k,\lambda} \omega_k b^\dagger_{k, \lambda} b_{k, \lambda} \ \ \ \ \text{and}
\qquad
H_I = -\sum_{i=1}^N \mathbf{d}\cdot\mathbf{E}(\mathbf{r}_i)\,\sigma_i^- + \mathbf{d^*}\cdot\mathbf{E}(\mathbf{r}_i)\,\sigma_i^+,
\end{equation}
where $n_i=(1+\sigma_i^z)/2$, $b^\dagger_{k,\lambda}$ ($b_{k,\lambda}$) creates (annihilates) a photon of frequency $\omega_k$ and polarization $\lambda$, and $\mathbf d$ is the vectorial electric-dipole transition matrix element, whose magnitude sets the coupling strength while its direction determines the polarization component of the electromagnetic field coupled to the atomic transition. The field is treated as a stationary bath at $T=0$, with initial state $\rho_{EM}$.

\

\emph{Born--Markov approximation.}
We aim to eliminate the electromagnetic degrees of freedom with standard Lindbladian dynamics assumptions \cite{breuer2002, stefanini2025lindblad}. 

\

Let $\rho(t)$ be the density operator of system+field. In the interaction picture
with respect to $H_0=H_A+H_{EM}$,
\begin{equation}
\dot{\rho}^I(t) = -i[H_I(t),\rho^I(t)],\qquad
H_I(t)=e^{iH_0 t}H_I e^{-iH_0 t}.
\end{equation}

The atomic part of the density matrix is obtained by tracing out the electromagnetic (EM) degrees of freedom. We assume weak coupling between the system and the EM field and treat the latter as a reservoir (Born approximation). Furthermore, we assume that the reservoir correlation time $\tau_B$ is short compared to the characteristic timescale of the atomic degrees of freedom, leading to a memoryless dynamics (Markov approximation). We then decompose the atomic part of the interaction Hamiltonian into eigenoperators of $H_A$

\begin{equation}\label{eq:eigop}
    A_i(\omega_\xi) = \sum_{\epsilon - \epsilon^\prime = \omega_\xi} \Pi_\epsilon \sigma_j^- \Pi_{\epsilon^\prime},
\end{equation}
with $\Pi_{\epsilon}$ is the projector onto the eigenspace of energy $\epsilon$ summed all over the subspaces with energy difference $\omega_\xi$. These operators correspond to the ladder operators of the free atomic Hamiltonian. Atomic interactions $(J)$ make the local transition frequency depending explicitly on the neighboring configuration. 
In one dimension, with the spectral structure of $H_A$ the energy cost of a local de-excitation at site $i$ depend on the neighboring occupations 
\begin{equation}
\xi=n_{i-1}+n_{i+1}\in\{0,1,2\}.
\end{equation}
Then, the transition
$\ket{a,\uparrow,b} \to \ket{a,\downarrow,b} $ have will emit with frequency
\begin{equation}
\omega_\xi=\Delta+\xi J,
\end{equation}
with $\xi=0,1,2$ depending on the states $\ket{a}, \ket{b}$. Therefore, we just need to be restricted to three-sites configurations, and the eigenoperators in Eq. \eqref{eq:eigop} found algebraically through the operator identity (see Ref. \cite{breuer2002} for more details)
\begin{equation}\label{eq:comm}
[H_A,A_i(\omega_\xi)] = - \omega_\xi A(\omega_\xi).
\end{equation}
Solving Eq. \eqref{eq:comm}, we find
\begin{align}\label{eq:jumps}
A_i(\omega_0) &= (1-n_{i-1})\sigma_i^-(1-n_{i+1}), \nonumber\\
A_i(\omega_1) &= n_{i-1}\sigma_i^-(1-n_{i+1})
+(1-n_{i-1})\sigma_i^- n_{i+1}, \\
A_i(\omega_2) &= n_{i-1}\sigma_i^- n_{i+1}. \nonumber
\end{align}
The interaction $(J)$ thus splits the bare transition $(\Delta)$ into three spectrally resolved channels associated with the local excitation number $\xi$, providing the microscopic origin of state-dependent decay.

\

These assumptions, transformations and approximations leads to the Redfield equation
\begin{align}
\dot\rho_S^I(t)
&=
\sum_{\omega,\omega'}\sum_{i,j}
e^{i(\omega-\omega')t}\Gamma_{ij}(\omega)
\Big[
A_j(\omega)\rho_S^I(t) A_i^\dagger(\omega')
-
A_i^\dagger(\omega')A_j(\omega)\rho_S^I(t)
\Big]
+ \text{H.c.},
\label{eq:redfield}
\end{align}
with bath correlation functions
\begin{equation}
\Gamma_{ij}(\omega)=\int_0^\infty ds\, e^{i\omega s}\,
\langle B_i(s)B_j(0)\rangle_{EM},
\qquad
\langle \cdots\rangle_{EM}\equiv \Tr_{EM}(\cdots \rho_{EM}).
\label{eq:Gamma_def}
\end{equation}
A Lamb-shift Hamiltonian arises from $\Im \Gamma_{ij}(\omega)$ and is omitted below. 

\

We integrate Eq. \eqref{eq:Gamma_def} assuming emission into free-space and subwavelength arrangement of the atoms, i.e the atoms are confined within a spatial scale way smaller than its emitted wavelength \cite{breuer2002}. The real part of Eq. \eqref{eq:Gamma_def} is the frequency dependent decay rate (in natural units), 
$\gamma(\omega)=4/3|\mathbf{d}|^2\omega^3$. For a local version of this model see Ref. \cite{Nill22}.

\

\emph{Rotating-wave approximation --} The Redfield generator in Eq.~\eqref{eq:redfield} contains oscillatory terms of the form $e^{i(\omega-\omega')t}$, which couple transitions with different Bohr frequencies. In our case, the transition frequencies differ by interaction-induced shifts $|\omega - \omega^\prime| =|  \xi-\xi'|J$, so that distinct decay channels are separated by multiples of $J$. When these splittings are large compared to the dissipative rates ($J \gg \gamma(\omega)$), the rapidly oscillating terms with $\omega\neq\omega'$ average to zero on the timescale of the system dynamics. Retaining only the resonant contributions with $\omega=\omega'$ eliminates cross-frequency couplings. Note that, in the absence of interactions ($J = 0$) Eq. \eqref{eq:redfield} subject to a subwavelength arrangement of atoms is reduced to Dicke superradiance, for more details see \cite{breuer2002}.
{While the microscopic derivation of the master equation is performed at the single-emitter level,   the decomposition into independent frequency-resolved Lindblad channels is ultimately ruled by the linewidth of the corresponding collective decay process. Since a superradiant burst has a characteristic linewidth of order $N_\xi \gamma_\xi$, where $N_\xi$ denotes the effective number of emitters participating in the bright sector of channel $\xi$, the secular condition is more generally
$J \gg N_\xi \gamma_\xi$.
In our model, however, each collective jump operator ($\hat{L}_\xi$) acts only on the subset of atoms satisfying the corresponding kinetic constraint. As a result, only this constrained bright sector contributes  to the collective linewidth, and the relevant number of emitters is the effective bright population $N_\xi$, which is generally expected to be substantially smaller than the total number of atoms.}  \\

  After RWA, the master equation therefore reduces to a sum of three independent Lindblad dissipators,
\begin{equation}
\dot\rho_S =
\sum_{\xi=0}^2 \gamma_\xi
\left[
S^-_\xi\rho_S S^+_\xi
-\frac{1}{2}\left\{S^+_\xi S^-_\xi,\rho_S\right\}
\right],
\label{eq:lind_constrained}
\end{equation}
where $S^-_\xi$ governs decay processes at frequency $\omega_\xi=\Delta+\xi J$.

Equation~\eqref{eq:lind_constrained} was written for the interaction-picture
state $\rho_S^I(t)=e^{iH_A t}\rho_S(t)e^{-iH_A t}$. Returning to the
Schr\"odinger picture gives
\begin{equation}
\dot\rho_S(t) = -i[H_A,\rho_S(t)] + \sum_{\xi=0}^2 \gamma_\xi
\left[
S^-_\xi\rho_S S^+_\xi
-\frac{1}{2}\left\{S^+_\xi S^-_\xi,\rho_S\right\}
\right]
\label{eq:ME_S}
\end{equation}
the master equation (Eq. $(2)$ in the main text) governing kinetically constrained superradiance. A similar derivation can be found in \cite{Nill22} where the authors study the effect of interactions on spontaneous emission away from the collective regime.

\section{Superradiant burst scalings}
\begin{figure*}[t]
    \centering
\includegraphics[width=0.49\textwidth]{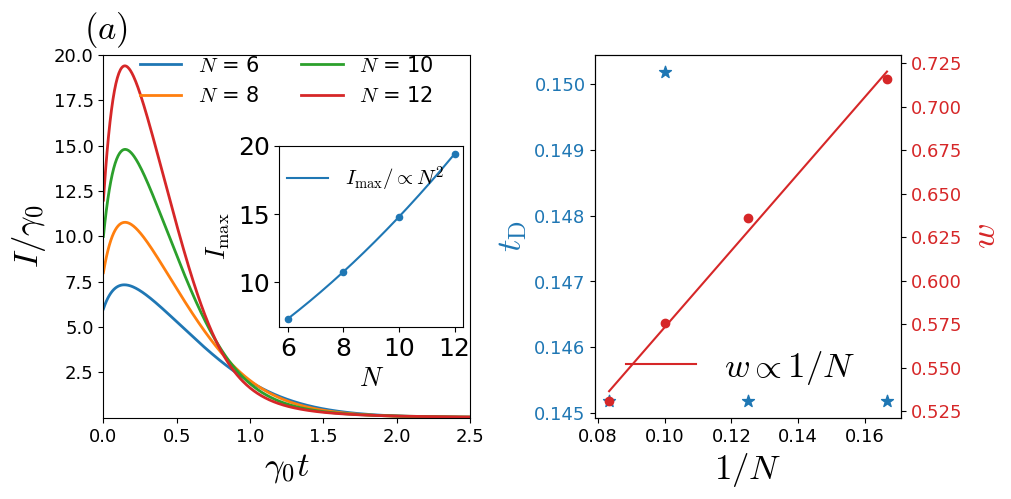}\hfill
\includegraphics[width=0.49\textwidth]{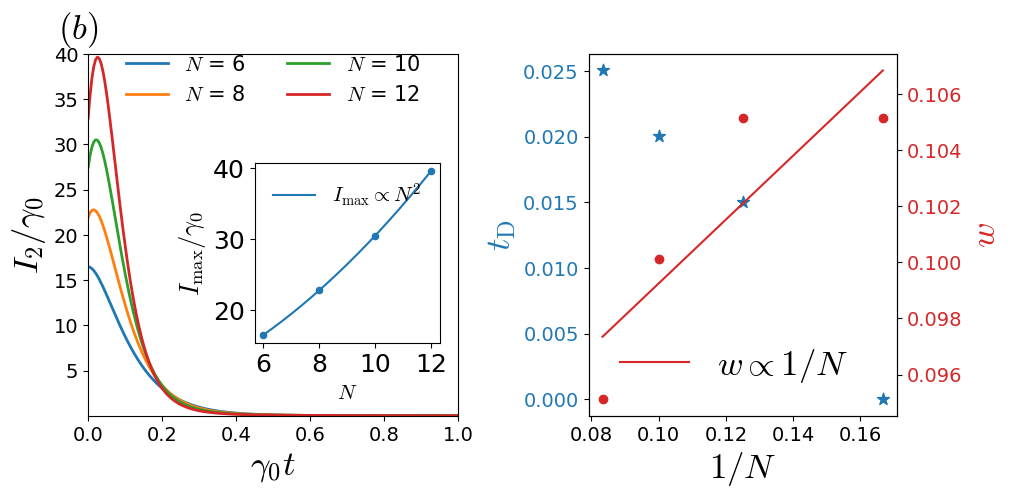}
    \vspace{2mm}
\includegraphics[width=0.49\textwidth]{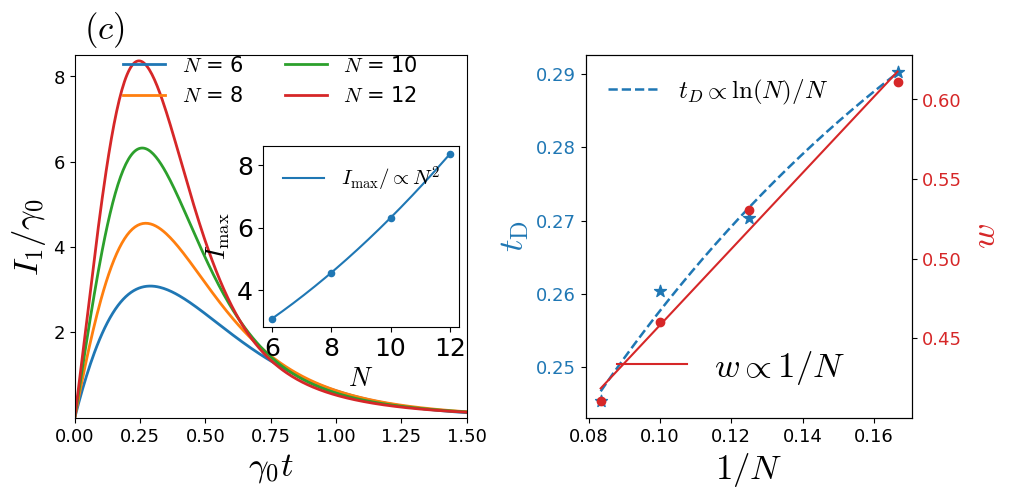}\hfill
\includegraphics[width=0.49\textwidth]{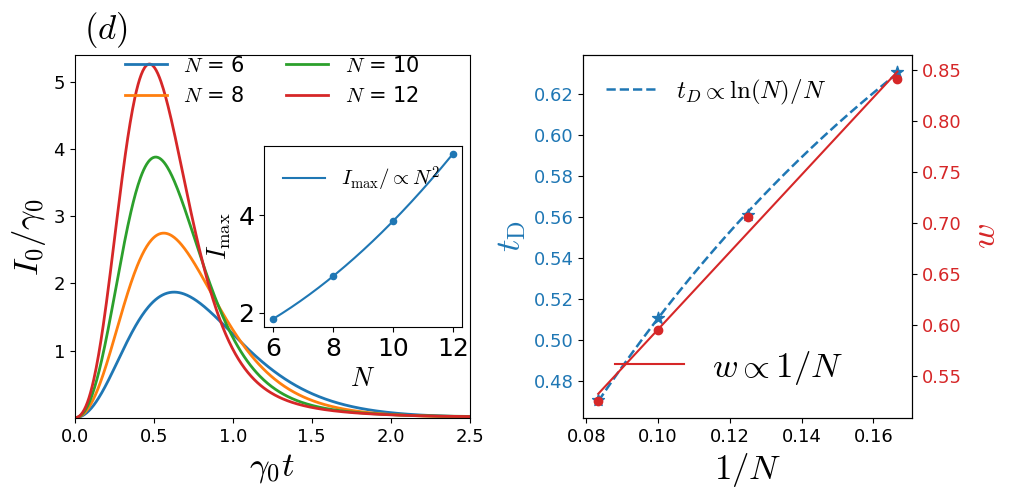}
    \caption{Finite-size scaling of superradiant bursts in the kinetically
constrained emission. Panels (a)–(d) show, respectively, the system-size
dependence for the total emission $I(t)$ (a), and for the
frequency-resolved components $I_2(t)$ (b), $I_1(t)$ (c), and $I_0(t)$ (d),
within the weak interaction regime $(\Delta/J = 0.2)$ in arbitrary units. Shown are the peak
intensity $(I_{\text{max}})$, burst width $(w)$, and delay time $(t_D)$ extracted from each emission
channel. The polynomial fit (red line) is performed exclusively for $w$
(red markers) while the $\ln (N)/N$ fit is done only for $I_0(t)$ and $I_1(t)$ and marked as a blue dashed line.}
\label{fig:sm_burst_scalings}
\end{figure*}
Superradiant bursts are known to exhibit characteristic scaling laws. The most prominent one is the scaling of the peak emission rate with the number of emitters, $I_{\max} \propto N^2$ \cite{breuer2002, gross1982superradiance, Masson20, Masson24}. In addition, the burst duration—quantified by the width of the superradiant peak—scales as $w \propto 1/N$ \cite{breuer2002, gross1982superradiance}. These scalings are robust signatures of collective emission.

They can be understood within a simple description of collective decay. Consider an ensemble of $N$ identical two-level emitters confined within a subwavelength volume, such that the spatial extent of the system is much smaller than the emitted wavelength. In this regime, the dynamics is permutation invariant and remains confined to the Dicke manifold, spanned by collective states $\ket{J,M}$ labeled by the total angular momentum $J$ and its magnetization $M$. The collective lowering operator $S^-=\sum_{j=1}^N \sigma_j^-$ acts as
\begin{equation}
S^- \ket{J,M} = \sqrt{(J+M)(J-M+1)}\, \ket{J,M-1}.
\label{eq:dicke_action}
\end{equation}

In the absence of dephasing, the reduced density matrix obeys the Lindblad master equation
\begin{equation}
\dot{\rho}=\gamma\left(S^- \rho S^+ - \tfrac{1}{2}\{S^+S^-,\rho\}\right).
\label{eq:dicke_lindblad}
\end{equation}
The above scalings directly follow from the $N$ dependence of the emission rate 
$I=\gamma \langle S^+S^- \rangle$. Using Eq.~\eqref{eq:dicke_action}, one obtains
\begin{equation}
\bra{J,M} S^+S^- \ket{J,M} = (J+M)(J-M+1),
\label{eq:intensity_dicke}
\end{equation}
which, for the fully symmetric manifold $J=N/2$, attains a maximum that scales as 
$I_{\max}\propto N^2$.

The peak emission occurs at $M=0$, corresponding to half of the emitters having decayed. 
The characteristic burst width can then be estimated by comparing the change in 
collective polarization to the maximal emission rate,
\begin{equation}
w \sim \frac{\Delta S_z}{I_{\max}} \sim \frac{1}{\gamma N},
\end{equation}
where $\Delta S_z = \langle S_z \rangle_{M=0} - \langle S_z \rangle_{M=N/2}$ and 
$S_z = \sum_{j=1}^N \sigma_j^z$.

A simple qualitative estimate for the superradiant delay time can be obtained following the argument of Gross and Haroche~\cite{gross1982superradiance}. In the Dicke ladder picture, the emission process is viewed as a cascade of collective states $\ket{J,M}$, where the transition rate between successive states is $\Gamma_{M\rightarrow M-1}=\Gamma (J+M)(J-M+1)$. At the beginning of the dynamics, starting from a fully inverted state, only a small number $s=J-M\ll N$ of photons has been emitted, so that one can approximate $J+M\simeq N$ and $J-M+1\simeq s$, yielding a linearized rate $\Gamma_s \simeq \Gamma N s$. The average waiting time for the $s$-th emission event is therefore $(\Gamma N s)^{-1}$, and summing these elementary intervals up to order $N$ gives an estimate for the buildup (delay) time of the superradiant burst, $t_D \sim \sum_{s=1}^{N}(\Gamma N s)^{-1} \simeq (\log N)/(\Gamma N)$, which captures the characteristic logarithmic correction to the cooperative $1/N$ scaling of the delay time.

\

Kinetically constrained superradiance displays the same scalings as conventional superradiance. These behaviors hold for both the total emission \( I(t) = \sum_{\xi = 0}^2 I_\xi \) and the 
frequency-resolved components \( I_\xi(t) = \gamma_\xi \langle S^+_\xi S^-_\xi \rangle \) with \( \xi = 2,1,0 \).  Importantly, this scaling remains unchanged even in the presence of the 
time-scale separation characteristic of the strongly interacting regime 
(\( J \gtrsim \Delta \)).

\

Figure~\ref{fig:sm_burst_scalings} displays the system-size dependence of the peak intensity and characteristic time scales for each emission channel for an initially fully inverted state $\ket{\uparrow}^{\otimes N}$. The $\xi=2$ component exhibits the strongest finite-size sensitivity. For small $N$ the bright sector addressed by $S^-_2$ is too restricted for the burst peak to build up sufficiently above its early-time value. Consequently, the peak formation is weak and the extracted delay time does not display the clear system-size dependence observed for $\xi=0,1$, appearing instead as scattered points (cf. Fig. \ref{fig:sm_burst_scalings} $(a)$ and $(b)$) with comparatively small values. Since the early-time dynamics of the total emission $I(t)$ is governed predominantly by $I_2(t)$, this finite-size limitation propagates to the total delay time $t_D$, producing a similarly non-monotonic and scattered behavior at small $N$. For this reason, we do not perform fits of $t_D$ for either the $\xi=2$ channel or the total emission. Aside from these delay-time finite-size effects, the remaining extracted scalings are consistent with standard superradiant behavior.

\section{Dark state entanglement}
{The presence of anti-ferromagnetic correlations $\langle \overline{n}_j n_{j +1} \rangle$ in the dark states of the kinetically constrained jump operators signals non-separability of the density matrix. We argue that it is not possible to have a separable density matrix with finite AFM while being dark to jump operators $S^-_\xi$ for $\xi = 0, 1, 2$. In this section we prove this claim.}

\begin{theorem}
Consider the kinetically constrained collective jump operators
\begin{equation}
    S^-_\xi = \sum_j P_j^{(\xi)} \sigma_j^-, \qquad \xi = 0, 1, 2
\end{equation}
where $P_j^{(0)} = (1-n_{j-1})(1-n_{j+1})$, 
$P_j^{(1)} = n_{j-1}(1-n_{j+1}) + (1-n_{j-1})n_{j+1}$ and $P_j^{(2)} = n_{j-1}n_{j+1}$.
Let $\rho_D$ be a dark state satisfying $\langle S^+_\xi S^-_\xi \rangle_{\rho_D} = 0$ 
for $\xi = 0, 1, 2$. If there exists a site $j$ such that
\begin{equation}
   \langle W\rangle_{\rho_D} \equiv  \langle (1 - n_{j-1})\, n_j \rangle_{\rho_D} > 0,
\end{equation}
then $\rho_D$ is entangled. Equivalently, every separable dark state satisfies 
$\langle (1-n_{j-1})\,n_j\rangle = 0$ for all $j$.
\end{theorem}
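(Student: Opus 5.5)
The plan is to show that every separable dark state is the all-down vacuum $\ket{\downarrow}^{\otimes N}$. The vanishing of $\langle (1-n_{j-1})n_j\rangle$ then follows at once. This is the statement quoted in the main text: the vacuum is the only separable common dark state.

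\emph{Step 1 (reduction to product pure states).} Let $\rho_D$ be separable. I write $\rho_D=\sum_k p_k\,\rho^{(1)}_k\otimes\cdots\otimes\rho^{(N)}_k$ with $p_k>0$, and spectrally decompose each single-site factor $\rho^{(i)}_k$. This gives $\rho_D=\sum_\alpha q_\alpha \ket{\phi_\alpha}\bra{\phi_\alpha}$ with $q_\alpha>0$, where each $\ket{\phi_\alpha}=\bigotimes_i\ket{\phi_{\alpha,i}}$ is a product pure state. Since $S^+_\xi S^-_\xi\ge 0$, the condition $0=\langle S^+_\xi S^-_\xi\rangle_{\rho_D}=\sum_\alpha q_\alpha \|S^-_\xi\ket{\phi_\alpha}\|^2$ forces $S^-_\xi\ket{\phi_\alpha}=0$ for every $\alpha$ and every $\xi=0,1,2$. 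The projectors satisfy $P^{(0)}_j+P^{(1)}_j+P^{(2)}_j=\mathbb{1}$. Summing over $\xi$ therefore gives $S^-\ket{\phi_\alpha}=0$ with $S^-=\sum_j\sigma^-_j$, for every $\alpha$.

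\emph{Step 2 (the only product state annihilated by $S^-$ is the vacuum).} This is the step carrying the real content. Fix a product state $\ket{\phi}=\bigotimes_i(a_i\ket{\downarrow}+b_i\ket{\uparrow})$ with $S^-\ket{\phi}=0$. Let $U=\{i:b_i\neq0\}$ and suppose, for contradiction, that $U\neq\emptyset$. Pick $m\in U$ and take the basis configuration $\ket{c_m}$ in which the sites of $U\setminus\{m\}$ are up and all other sites, including $m$, are down. Sites $i\notin U$ satisfy $a_i\neq 0$ by normalization.

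I then compute $\bra{c_m}\sigma^-_l\ket{\phi}$ for each $l$:
\begin{itemize}
\item For $l\notin U$ it vanishes, because $\sigma^-_l$ requires an up component at $l$ and $b_l=0$.
\item For $l\in U$ with $l\neq m$ it vanishes, because the output has $l$ down while $\ket{c_m}$ has $l$ up.
\item For $l=m$ it equals $\prod_{i\in U}b_i\prod_{i\notin U}a_i\neq0$.
\end{itemize}
Hence $\bra{c_m}S^-\ket{\phi}\neq0$, a contradiction. So $U=\emptyset$ and $\ket{\phi}=\ket{\downarrow}^{\otimes N}$ up to a phase.

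\emph{Step 3 (conclusion).} By Steps 1–2, every separable dark state equals $\rho_D=\ket{\downarrow}^{\otimes N}\bra{\downarrow}^{\otimes N}$. Then $0\le\langle(1-n_{j-1})n_j\rangle_{\rho_D}\le\langle n_j\rangle_{\rho_D}=0$ for all $j$. The contrapositive is the theorem: $\langle W\rangle_{\rho_D}>0$ for some $j$ implies that $\rho_D$ is entangled.

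The main point requiring care is Step 1. One must pass from darkness of the mixture to darkness of each product component, which uses positivity of $S^+_\xi S^-_\xi$ together with refining a separable decomposition into product \emph{pure} states. Once that is in place, the combinatorial argument in Step 2 is short.
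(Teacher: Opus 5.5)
Your proof is correct and follows the paper's argument step for step. You refine the separable decomposition into product pure states, use $S^+_\xi S^-_\xi\ge 0$ to force $S^-_\xi\ket{\phi_\alpha}=0$ for every component, sum over $\xi$ using $\sum_\xi P^{(\xi)}_j=1$ to get $S^-\ket{\phi_\alpha}=0$, and conclude that each component is the vacuum.

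The only difference is in how you prove that last lemma. The paper computes $\|S^-\ket{\Phi}\|^2=\sum_k|\beta_k|^4+|\sum_k\bar\alpha_k\beta_k|^2$, which also gives the quantitative bound $\langle S^+S^-\rangle\ge\sum_k\langle n_k\rangle^2$ for any product state. You instead exhibit an explicit configuration $\ket{c_m}$ with $\bra{c_m}S^-\ket{\phi}=\prod_{i\in U}b_i\prod_{i\notin U}a_i\neq 0$. Both arguments are valid. Your explicit refinement of mixed product factors into pure ones also fills in a step the paper takes for granted.
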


We first record two facts.

\begin{lemma}[Channel completeness]\label{lem:complete}
At every site $\sum_{\xi=0}^{2}P_j^{(\xi)}= 1$, and consequently
\begin{equation}
  \sum_{\xi=0}^{2}S^-_\xi=\sum_j\Big(\sum_{\xi}P_j^{(\xi)}\Big)\sigma_j^-
  =\sum_j\sigma_j^-=:S^- ,
\end{equation}
the bare collective lowering operator.
\end{lemma}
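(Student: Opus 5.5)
The plan is to establish the identity at the level of a single site first, and then sum over sites using linearity. Throughout, write $a \equiv n_{j-1}$ and $b \equiv n_{j+1}$. These two operators act on sites other than $j$, so they commute with each other and with $\sigma_j^-$. The only algebraic input needed is that commuting operators can be multiplied out like ordinary variables.

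\emph{Step 1: the single-site identity.} Expanding the three projectors gives
\begin{equation}
\sum_{\xi=0}^{2}P_j^{(\xi)} = (1-a)(1-b) + \big[a(1-b)+(1-a)b\big] + ab .
\end{equation}
Pairing the first term with $(1-a)b$ and the second with $ab$ gives
\begin{equation}
(1-a)\big[(1-b)+b\big] + a\big[(1-b)+b\big] = (1-a)+a = 1 .
\end{equation}
This is the binomial resolution of identity $\big[(1-a)+a\big]\big[(1-b)+b\big]=1$, regrouped according to the number $\xi=a+b$ of excited neighbours. It does not even use idempotency $n^2=n$.

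Idempotency is only needed to see that the $P_j^{(\xi)}$ are mutually orthogonal projectors. That fact is not required for the lemma. It does, however, handle the degenerate periodic chain $N=2$, where $j-1=j+1$ and hence $a=b$. In that case the expansion above still holds verbatim, because it only used commutativity of $a$ and $b$, which holds trivially when they are equal.

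\emph{Step 2: the collective identity.} By definition $S^-_\xi=\sum_j P_j^{(\xi)}\sigma_j^-$. Summing over $\xi$ and exchanging the two finite sums gives
\begin{equation}
\sum_{\xi}S^-_\xi=\sum_j\Big(\sum_\xi P_j^{(\xi)}\Big)\sigma_j^- .
\end{equation}
Inserting Step 1 then yields $\sum_j\sigma_j^-=S^-$.

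There is no genuine obstacle in this argument. The only point requiring care is the ordering of $P_j^{(\xi)}$ and $\sigma_j^-$. Since the projectors act only on sites $j\pm1$, they commute with $\sigma_j^-$, so no ordering issue arises. This holds for any $N\ge 2$ with periodic boundary conditions.
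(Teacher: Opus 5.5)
Your proof is correct and is essentially the paper's argument. The paper observes that $P_j^{(0)},P_j^{(1)},P_j^{(2)}$ are the orthogonal projectors onto the exhaustive sectors with $0$, $1$, $2$ excited neighbours and therefore sum to the identity; you verify the same resolution of the identity explicitly, as the regrouped expansion of $[(1-n_{j-1})+n_{j-1}][(1-n_{j+1})+n_{j+1}]$, and then perform the same exchange of finite sums. Your observation that neither idempotency nor orthogonality is needed, so the degenerate $N=2$ chain is also covered, is a small bonus rather than a different route.
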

\begin{proof}
The two neighbours of $j$ carry $0$, $1$, or $2$ excitations, and
$P_j^{(0)},P_j^{(1)},P_j^{(2)}$ are the orthogonal projectors onto these three
mutually exclusive and exhaustive sectors; they therefore sum to the identity on
the neighbour pair and trivially on every other tensor factor.
\end{proof}

\begin{lemma}[Product dark states are trivial]\label{lem:prod}
For any product state $|\Phi\rangle=\bigotimes_k\big(\alpha_k|0\rangle+\beta_k|1\rangle\big)$,
\begin{equation}\label{eq:norm}
  \big\|S^-|\Phi\rangle\big\|^2
  =\sum_k|\beta_k|^4+\Big|\sum_k\bar\alpha_k\beta_k\Big|^2 .
\end{equation}
In particular $S^-|\Phi\rangle=0$ iff $\beta_k=0$ for all $k$, i.e.\ iff
$|\Phi\rangle=|0\rangle^{\otimes N}$.
\end{lemma}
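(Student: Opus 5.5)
The plan is to compute $\|S^-|\Phi\rangle\|^2=\langle\Phi|S^+S^-|\Phi\rangle$ directly. The factorization of $|\Phi\rangle$ turns every two-site expectation value into a product of single-site ones. Throughout I normalize each factor, $|\alpha_k|^2+|\beta_k|^2=1$, which is implicit in \eqref{eq:norm}. I use the convention $\sigma^-|1\rangle=|0\rangle$, with $|1\rangle$ the excited state.

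\textbf{Step 1: expand and split.} Writing $S^+S^-=\sum_{k,l}\sigma_k^+\sigma_l^-$, I split the sum into diagonal and off-diagonal terms.

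For $k=l$ one has $\sigma_k^+\sigma_k^-=n_k$, so each diagonal term gives $\langle n_k\rangle=|\beta_k|^2$.

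For $k\neq l$ the operators act on different tensor factors, so the expectation factorizes as $\langle\sigma_k^+\rangle\langle\sigma_l^-\rangle$. The single-site values are
\begin{equation*}
\langle\sigma_l^-\rangle=\bar\alpha_l\beta_l,\qquad \langle\sigma_k^+\rangle=\alpha_k\bar\beta_k .
\end{equation*}

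\textbf{Step 2: complete the square.} The off-diagonal sum is the full double sum with its diagonal removed:
\begin{equation*}
\sum_{k\neq l}\alpha_k\bar\beta_k\,\bar\alpha_l\beta_l=\Big|\sum_k\bar\alpha_k\beta_k\Big|^2-\sum_k|\alpha_k|^2|\beta_k|^2 .
\end{equation*}
Adding the diagonal contribution and using $1-|\alpha_k|^2=|\beta_k|^2$ gives
\begin{equation*}
\|S^-|\Phi\rangle\|^2=\sum_k|\beta_k|^2\big(1-|\alpha_k|^2\big)+\Big|\sum_k\bar\alpha_k\beta_k\Big|^2=\sum_k|\beta_k|^4+\Big|\sum_k\bar\alpha_k\beta_k\Big|^2 ,
\end{equation*}
which is exactly \eqref{eq:norm}.

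\textbf{Step 3: the equivalence.} Both terms in \eqref{eq:norm} are nonnegative. Hence $S^-|\Phi\rangle=0$ forces $\sum_k|\beta_k|^4=0$, so $\beta_k=0$ for every $k$, and $|\Phi\rangle=|0\rangle^{\otimes N}$ up to a phase. Conversely, $S^-$ annihilates the vacuum, since each $\sigma_k^-|0\rangle=0$.

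There is no real obstacle in this argument. The only points needing care are keeping the complex conjugations consistent in $\langle\sigma^\pm\rangle$ and using the per-site normalization to cancel the $|\alpha_k|^2|\beta_k|^2$ terms. It is precisely the $|\beta_k|^4$ term, which survives this cancellation, that makes the conclusion immediate.
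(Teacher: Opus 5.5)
Your proposal is correct and follows the paper's own proof essentially step for step. Like the paper, you expand $\langle S^+S^-\rangle$ into diagonal terms $\langle n_k\rangle=|\beta_k|^2$ and factorized off-diagonal terms, complete the square in $z_k=\bar\alpha_k\beta_k$, and use per-site normalization to turn $|\beta_k|^2(1-|\alpha_k|^2)$ into $|\beta_k|^4$. Your explicit statement of the normalization assumption and of the converse direction (that $S^-$ annihilates the vacuum) are small clarifications the paper leaves implicit.
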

\begin{proof}
Put $z_k=\bar\alpha_k\beta_k=\langle\sigma_k^-\rangle$, so $\langle\sigma_k^+\rangle=\bar z_k$
and $\langle n_k\rangle=|\beta_k|^2$. As the state is a product,
$\langle\sigma_j^+\sigma_k^-\rangle=\bar z_jz_k$ for $j\neq k$ and
$\langle\sigma_j^+\sigma_j^-\rangle=|\beta_j|^2$. Hence
\begin{align}
  \big\|S^-|\Phi\rangle\big\|^2
  &=\sum_{j,k}\langle\sigma_j^+\sigma_k^-\rangle
   =\sum_j|\beta_j|^2+\sum_{j\neq k}\bar z_jz_k \notag\\
  &=\sum_j|\beta_j|^2-\sum_j|z_j|^2+\Big|\sum_k z_k\Big|^2
   =\sum_j|\beta_j|^2\big(1-|\alpha_j|^2\big)+\Big|\sum_k z_k\Big|^2 ,
\end{align}
and $1-|\alpha_j|^2=|\beta_j|^2$ yields \eqref{eq:norm}. Both terms are
non-negative, so the sum vanishes iff every $|\beta_k|^4=0$.
\end{proof}

\begin{proof}[Proof of the Theorem]
Assume for contradiction that $\rho_D$ is separable,
\begin{equation}
  \rho_D=\sum_\mu p_\mu|\Phi_\mu\rangle\langle\Phi_\mu|,\qquad p_\mu>0,\quad
  |\Phi_\mu\rangle=\bigotimes_k|\phi_k^\mu\rangle .
\end{equation}
\\

For each $\xi$, $S^+_\xi S^-_\xi\ge0$, so
\begin{equation}
  0=\langle S^+_\xi S^-_\xi\rangle_{\rho_D}
   =\sum_\mu p_\mu\,\big\|S^-_\xi|\Phi_\mu\rangle\big\|^2 .
\end{equation}
Each term is non-negative and $p_\mu>0$, hence $S^-_\xi|\Phi_\mu\rangle=0$ for all
$\mu$ and all $\xi\in\{0,1,2\}$.\\

Summing over $\xi$ and using Lemma~\ref{lem:complete},
\begin{equation}
  S^-|\Phi_\mu\rangle=\sum_{\xi}S^-_\xi|\Phi_\mu\rangle=0\qquad\forall\,\mu .
\end{equation}
\\ 

Each $|\Phi_\mu\rangle$ is a product state with $S^-|\Phi_\mu\rangle=0$, so by
Lemma~\ref{lem:prod}, $|\Phi_\mu\rangle=|0\rangle^{\otimes N}$ for every $\mu$.
Therefore $\rho_D=|0\rangle^{\otimes N}\langle0|^{\otimes N}$: the vacuum is the
\emph{only} separable KC-SR dark state.\\

Since $n_j|0\rangle^{\otimes N}=0$,
\begin{equation}
  \langle W_j\rangle_{\rho_D}
  =\langle0|^{\otimes N}(1-n_{j-1})\,n_j\,|0\rangle^{\otimes N}=0 ,
\end{equation}
contradicting $\langle W_j\rangle_{\rho_D}>0$. Hence $\rho_D$ is entangled.
\end{proof}

\section{Finite-size scaling}
\begin{figure}
\centering
\includegraphics[width=0.5\textwidth]{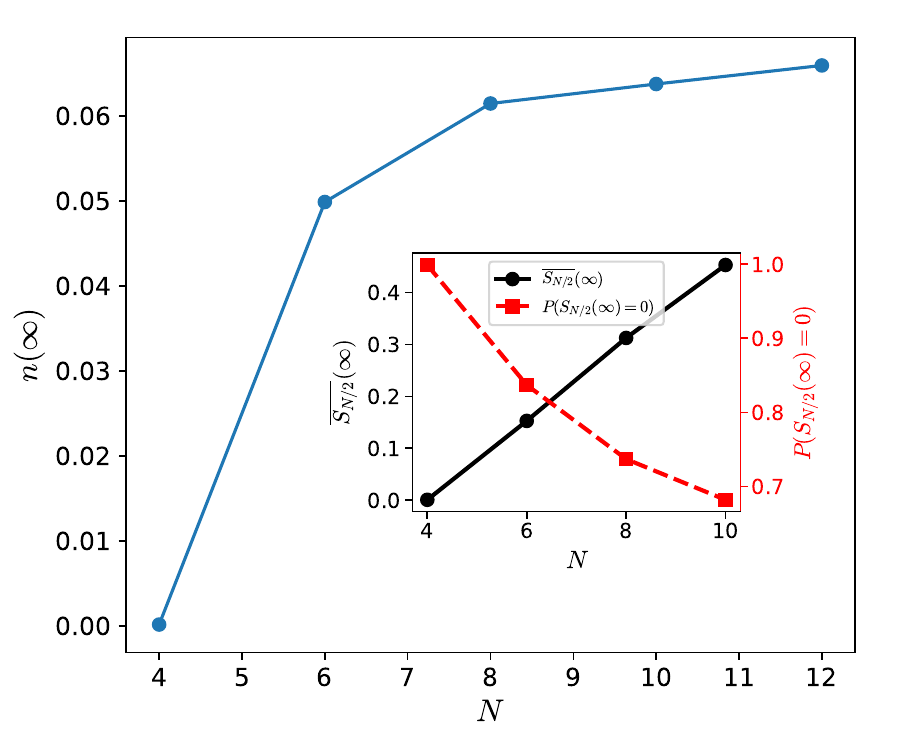}
    \caption{Finite-size analysis of excitation trapping and entanglement in the
kinetically constrained model. Results are obtained in the weak
interaction regime $(\Delta/J = 0.2)$, starting from the fully inverted
state, and averaged over $10^3$ quantum trajectories. Left panel (a): steady-state excitation density $n$ as a function of system size. Right Panel (b) Main panel: distribution of the half-chain von Neumann entropy
$S_{N/2}$ over stochastic trajectories for different system sizes.
Inset: trajectory-averaged entropy (black curve) together with the probability of trivial trajectories (red
curve).}
    \label{fig:finite_size}
\end{figure}
To assess the robustness of our results with increasing system size, we analyze finite-size scalings initializing dynamics from a fully inverted initial state, $\ket{\uparrow}^{\otimes N}$. The non-trivial steady state manifests through excitation trapping. In Fig.~\ref{fig:finite_size} {(main panel)} we show the steady-state excitation density,
\begin{equation}
n = \frac{1}{N}\sum_{j=1}^N \frac{\sigma_j^z + 1}{2},
\end{equation}
evaluated at the density-matrix level as a function of $N$. The data indicate convergence toward a finite value, supporting an extensive amount of trapped excitations.

\

We further characterize entanglement generation. {We compute} the half-chain von Neumann entropy as a proxy for pure state bi-partite entanglement generation \cite{NielsenChuang2010},
\begin{equation}
S_{N/2}(\rho^s) = - \mathrm{Tr}\!\left(\rho^s_{N/2}\log \rho^s_{N/2}\right),
\end{equation}
where $\rho^s=\ket{\psi^s(t)}\bra{\psi^s(t)}$ and $\rho^s_{N/2}=\mathrm{Tr}_{N/2}(\rho^s)$, evaluated over an ensemble of $M$ stochastic realizations for different $N$. With increasing system size, the secondary peak broadens and shifts to larger entropy values. 

\

The inset of Fig.~\ref{fig:finite_size} shows the trajectory-averaged entropy (black curve) together with the probability of trivial trajectories (red curve) which is related to the probability of finding a trajectory with $n = 0$. The finite-size analysis supports this conclusion and further reveals that larger systems sustain increasingly entangled steady state. Furthermore, these results are unchanged when interactions are stronger ($J \gtrsim \Delta$).

\end{document}